\newtheorem{claim}{Claim}
\newcommand{\Jd}{\mathrel{\diamond}}
\newcommand{\sA}{\mathrel{\subseteq_\forall}}
\DeclareMathOperator{\diam}{diam}
\DeclareMathOperator{\mesh}{mesh}
\DeclareMathOperator{\fdiam}{fdiam}
\DeclareMathOperator{\fmesh}{fmesh}
\newcommand{\Xd}{$(X,d\mspace{2mu})$}
\newcommand{\Xdalfa}{$(X,d,\alpha)$}
\newcommand{\R}{\mathbb R}
\newcommand{\N}{\mathbb N}
\newcommand{\eps}{\varepsilon}
\newcommand{\abs}[1]{\left|#1\right|}
\newcommand{\cl}{\operatorname{Cl}}
\newcommand{\kraj}{\overline}
\newcommand{\zadnji}[1]{{(#1)_{\kraj{#1}}}}
\newcommand{\luk}[3]{\overline{#1#2}^{#3}\!}
\newcommand{\ugll}{{[\mspace{2mu}l\mspace{2mu}]}}
\newcommand{\uglj}{{[\mspace{2mu}j\mspace{1mu}]}}
\newcommand{\ugllc}{{[\mspace{2mu}l'\mspace{1mu}]}}
\newcommand{\uglln}{{[\mspace{2mu}l_n\mspace{1mu}]}}
\newcommand{\defj}{:=}
\newcommand{\qand}{\text{\quad and\quad}}
\newcommand{\citeup}[2]{\textup{\cite[#1]{#2}}}
\newcommand{\defn}{\textbf}
\newcommand{\Sjedan}{\mathbb S^1}
\begin{document}

\title[Computable approximations of semicomputable graphs]{Computable approximations\texorpdfstring{\\}{} of semicomputable graphs}

\author{Vedran Čačić\lmcsorcid{0000-0003-2330-9740}}
\author{Matea Čelar\lmcsorcid{0000-0002-3850-1869}}
\author{Marko Horvat\lmcsorcid{0000-0003-0515-8723}}
\author{Zvonko Iljazović\lmcsorcid{0000-0003-0755-8050}}
\email{\quad veky@math.hr,\quad matea.celar@math.hr,\quad mhorvat@math.hr,\quad zilj@math.hr}
\address{University of Zagreb, Faculty of Science, Department of Mathematics}

\begin{abstract}
In this work, we study the computability of topological graphs, which are obtained by gluing arcs and rays together at their endpoints.
We prove that every semicomputable graph in a computable metric space can be approximated, with arbitrary precision, by its computable
subgraph with computable endpoints.
\end{abstract}

\maketitle

\section{Introduction}

Our work progresses the research programme, started by Miller \cite{miller}, that is trying to determine which conditions render semicomputable subsets of computable metric spaces computable.

A compact set $S$ in a computable metric space $(X,d,\alpha )$ is said to be semicomputable if we can effectively enumerate all finite unions of rational open balls in $(X,d,\alpha )$ which cover $S$. In certain computable metric spaces (such as in Euclidean space) this property is equivalent to the fact that the complement $X\setminus S$ of $S$ can be effectively exhausted by rational open balls. On the other hand, a closed subset $S$ of a computable metric space is said to be \emph{computably enumerable} (\emph{c.e.}) if we can effectively enumerate all rational open balls intersecting $S$. A compact set $S$ is said to be computable if $S$ can be effectively approximated by a finite set of rational points with any given precision. It can be shown that a subset is computable if and only if it is semicomputable and c.e.

Although each nonempty computable set contains computable points (moreover, they are dense in it), there is a nonempty semicomputable subset of $\mathbb{R}$ which does not contain any computable point. So a semicomputable set can be thought of as being ``far from computable''. While we proceed from the assumption of semicomputability, one might also ask whether every compact set is homeomorphic to a computable set. The answer is negative, moreover there is a c.e. closed subset of $[0,1]$ which is not homeomorphic to any computable compact subset of any computable metric space~\cite{badaev2025non, bosserhoff2025}.

However, there are certain topological conditions under which semicomputable sets are in fact computable. It was shown in \cite{miller} that each semicomputable set homeomorphic to a sphere is computable. Moreover, every semicomputable manifold is computable \cite{lmcs:mnf, manifolds-top}. Many other results on this topic can be found in \cite{ah, ah2022, csoschc, kihara, compintpoint, lmcs:1mnf}.

On the other hand, even a semicomputable set of a simple topological type such as a semicomputable arc (i.e.\ a space homeomorphic to $[0,1]$) need not be computable. Natural questions arise if $S$ is semicomputable: does $S$ contain a computable point? Can $S$ be approximated by a computable subset of a certain type?

Kihara constructed in \cite{kihara}, as the
answer to a question in \cite{ziegler}, 
a nonempty semicomputable compact set in the plane which is simply
connected (in fact, contractible) and contains no computable points.

In the case of an arc, the following result is known: if $S$ is a semicomputable arc with endpoints $a$ and $b$, then there exist computable points $a'$ and $b'$ in $S$ arbitrarily close to $a$ and $b$, respectively, and a subarc $T$ of $S$ whose endpoints are $a'$ and $b'$ and such that $T$ is a computable set \cite{jucs, compintpoint}. Moreover, a similar result holds for decomposable chainable continua which are spaces more general than arcs \cite{jucs, compintpoint}.

In this paper, we consider \emph{(topological) graphs}: spaces obtained by gluing arcs along their vertices. A semicomputable set $S$ is computable if $S$ is a topological graph such that each of its endpoints is computable \cite{mlq-graphs}.

In general, however, a semicomputable graph need not be computable. So it makes sense to ask the following question: if $S$ is a semicomputable graph, can $S$ be approximated by a computable subgraph? We prove that each semicomputable graph $S$ can be approximated by a computable subgraph which is obtained by cutting off small subarcs around all the uncomputable endpoints of $S$.

The notion of a semicomputable set can be extended to sets which are not necessarily compact. For example, each semicomputable $1$-manifold (not necessarily compact) with finitely many connected components is computable \cite{lmcs:1mnf}. In \cite{mlq-graphs} the notion of a \emph{noncompact graph} was investigated and it was proved that semicomputability of such a space implies its computability under the assumption that each of its endpoints is computable.
In this paper we prove a result analogous to the result for graphs: each semicomputable noncompact graph $S$ can be approximated by a computable noncompact subgraph which is obtained by cutting off small subarcs around all the uncomputable endpoints of $S$.

If $S$ is a semicomputable set in a computable metric space and $x\in S$ such that $x$ has a neighborhood in $S$ homeomorphic to $\mathbb{R}^{n} $ for some $n\in \N\setminus \{0\}$, then $x$ has a neighborhood in $S$ which is a computable set. This result was proved in \cite{apal}. In the case when $n=1$ we prove a stronger result: $x$ has a computable neighborhood in $S$ which is an arc with computable endpoints. This fact is crucial in the proofs  of the main results of the paper.

It should be mentioned that replacing our setting of computable metric spaces with that of computable Hausdorff spaces would not lead to  more general results. Recently,
it was shown~\cite[Theorem 3.4]{ah} that every semicomputable set $S$ in a computable Hausdorff space can be effectively embedded into the Hilbert cube, i.e.\ there is a computable homeomorphism $f$ with a computable inverse such that $f(S)$ is semicomputable and for each computable subset $T$ of the Hilbert cube, $f^{-1}(T)$ is computable.

\section{Background}

\subsection{Computable metric spaces}\label{ssec:compmet}

In this subsection we provide some basic facts about computable metric spaces. See~\cite{per,we,turing,we2,br-we,br-pr,jucs,manifolds-top,dowmel,IK21}.

Let $k\in\N\setminus\{0\}$. A function $f:\N^k
\to\mathbb Q$ is said to be \defn{computable} if there
exist computable (i.e.\ recursive) functions $a,b
,c:\N^k\to\N$ such that
\begin{equation*}
f(x)=(-1)^{c(x)}\cdot\dfrac{a(x)}{b(x)+1}
\end{equation*}
for each $x\in\N^k$.
A function $f:\N^k\to\R$ is said to be \defn{computable} if there exists a computable function $F:\N^{k+1}\to\mathbb Q$ such that
\[\abs{f(x)-F(x,i)}<2^{-i}\]
for all $x\in\N^k$, $i\in\N$.

Let \Xd\ be a metric space and let $\alpha $ be a sequence in $X$ such that $\alpha(\N)$ is a dense set in \Xd. We say that \Xdalfa\ is a \defn{computable metric space} if the function $(i,j)\mapsto d(\alpha_i,\alpha_j):\N^2\to\R$ is computable.

For example, if $n\in\N\setminus\{0\}$, $d$ is the Euclidean metric on $\R^n$ and $\alpha:\N\to\mathbb Q^n$ is an effective enumeration of $\mathbb Q^n$, then $(\R^n,d,\alpha)$ is a computable metric space. This space is called \defn{computable Euclidean space}.

Let \Xdalfa\ be a computable metric space. A point $x \in X$ is said to be \defn{computable} in \Xdalfa\ if there is a computable function $f : \N \to \N$ such that 
\[d\bigl(x, \alpha_{f(k)}\bigr) < 2^{-k}\]
for all $k \in \N$.

From now on, let $(j, i) \mapsto (j)_i:\N^2 \to \N$ and $j \mapsto \kraj j:\N \to \N$ be fixed computable functions such that 
\[  \bigl\{ \bigl( (j)_0, (j)_1, \dots, \zadnji j\bigr) \bigm| j \in \N \bigr\}\]
is the set of all nonempty finite sequences in $\N$.

For $j \in \N$ let 
\[ \uglj \defj \bigl\{ (j)_0,\dots,\zadnji j\bigr\} \text.\]
Then each nonempty finite subset of $\N$ is equal to $\uglj$ for some $j \in \N$.

Let \Xd\ be a metric space, $A, B \subseteq X$ and $\eps > 0$. We say that $A$ and $B$ are \defn{$\eps$-close}, and write $A\approx_\eps\!\! B$, if
\[(\forall a \in A)(\exists b \in B)\bigl(d(a, b) < \eps\bigr) \qand (\forall b \in b)(\exists a \in A)\bigl(d(a, b) < \eps\bigr) \text.\]

If $A$ and $B$ are nonempty compact sets in \Xd, the number $\inf\{\eps >0\mid A\approx_\eps\! B\}$ is called the \defn{Hausdorff distance} from $A$ to $B$ and it is denoted by $d_{H}(A,B)$.

It is not hard to check that, for $\eps >0$, we have $d_{H}(A,B)<\eps $ if and only if $A\approx_\eps\! B$.

\begin{defi}
    Let \Xdalfa\ be a computable metric space. We say that a compact set $S \subseteq X$ is \defn{computable} in \Xdalfa\ if $S=\emptyset$ or there is a computable function $f : \N \to \N$ such that
    \[S \approx_{2^{-k}}\! \{\alpha_i \mid i \in [f(k)] \},\quad \text{for all } k\in\N\text.\]
\end{defi}

Let \Xdalfa\ be a computable metric space. Let $\tau_1,\tau_2:\N\to\N$ be fixed computable functions such that $\bigl\{\bigl(\tau_1(i),\tau_2(i)\bigr)\bigm| i\in\N\bigr\}=\N^2$ and let $q:\N\to\mathbb Q$ be an effective enumeration of all positive rational numbers. For $i\in\N$ let
\[I_i\defj B\!\left(\alpha_{\tau_1(i)},q_{\tau_2(i)}\right)\! \qand \hat I_i \defj \hat B\!\left(\alpha_{\tau_1(i)},q_{\tau_2(i)}\right)\!\text,\]
where $B\!\left(\alpha_{\tau_1 (i)},q_{\tau_2(i)}\right)\!$ is the open ball in \Xd\ with radius $\rho_i\defj q_{\tau_2(i)}$ centered at $\lambda_i\defj\alpha_{\tau_1(i)}$, and $\hat B\!\left(\lambda_i,\rho_i\right)\!$ is the corresponding closed ball.

For $j,l \in \N$ let
\[ J_j \defj \bigcup\nolimits_{i \in \uglj} I_i,\quad \hat J_j \defj \bigcup\nolimits_{i \in \uglj} \hat I_i \qand  J_\ugll \defj \bigcup\nolimits_{j \in \ugll} J_j\text.\]
Then $(J_j)$ and $(\hat J_j)$ are, respectively, effective enumerations of all finite unions of rational open and closed balls.

\begin{defi}\label{def:ce-semicomp-cpt}
Let \Xdalfa\ be a computable metric space.
\begin{enumerate}[label=(\roman*)]
    \item\label{it:ce} A closed set $S$ in \Xd\ is said to be \defn{computably enumerable} (c.e.)\ in \Xdalfa\\ if the set $\{i\in\N\mid I_i \cap S\ne \emptyset\}$ is c.e.
    \item\label{it:semicomp-cpt} A compact set $S$ in \Xd\ is said to be \defn{semicomputable} in \Xdalfa\\ if the set $\{j\in \N\mid S\subseteq J_{j}\}$ is c.e.
\end{enumerate}
\end{defi}
It is not hard to see that these definitions do not
depend on the choice of particular functions $q$, $\tau_1$, $\tau_2$ and $j\mapsto\uglj$. Semicomputable sets are also called \textit{computably compact} or \textit{effectively compact} in the literature \cite{dowmel}.

We write $K\subseteq_\eps J_j$ if:
\begin{itemize}
\item $K\subseteq J_j$,
\item for all $i\in\uglj$, $I_i\cap K\ne\emptyset$, and
\item for all $i\in\uglj$, $\rho_i<\eps$.
\end{itemize}
Note that $K \subseteq_\eps J_j$ implies $K \approx_\eps\! \{ \lambda_i \mid i \in [j] \}$.

Among multiple  characterizations of a computable compact set, we highlight the following one~\cite[Proposition 2.6]{lmcs:mnf}:
\begin{equation}\label{eq:compset-ce-semi}
S\text{ computable} \quad\Longleftrightarrow\quad S\text{ c.e.\ and semicomputable.}
\end{equation}

The notion of semicomputability can be generalized to noncompact sets~\cite{lmcs:1mnf,manifolds-top}.
\begin{defi}
Let \Xdalfa\ be a computable metric space.\\ A set $S \subseteq X$ is \defn{semicomputable} in \Xdalfa\ if the following hold:
\begin{enumerate}
\item[(i)] $S\cap B$ is a compact set for each closed ball $B$ in \Xd;
\item[(ii)] the set $\{(i,j)\in \N^2\mid \hat I_i \cap S\subseteq J_j\}$ is c.e.
\end{enumerate}  
\end{defi}

This notion extends the previously defined notion of semicomputability in the following sense: if $S$ is a compact set in \Xd, then $S$ is semicomputable in the sense of the latter definition if and only if $S$ is semicomputable in the sense of Definition~\ref{def:ce-semicomp-cpt}\ref{it:semicomp-cpt}. 

Additionally, note that semicomputability, in our general sense, differs from the seemingly related notion of effective local compactness~\cite{elc} used by Pauly. A distinguishing example are the natural numbers with the discrete metric --- this space is not semicomputable, however it is effectively locally compact. 

The notion of a computable set in a computable metric space can be now extended to noncompact sets in the obvious way: $S$ is \defn{computable} in \Xdalfa\ if $S$ is semicomputable and c.e.\ in \Xdalfa.

Each semicomputable set $S$ is co-c.e., which means that $X\setminus S=\bigcup_{i\in A}I_i$ for some c.e.\ set $A\subseteq\N$. On the other hand, a co-c.e.\ set need not be semicomputable, but the equivalence
\[S\text{ semicomputable}\Longleftrightarrow S\text{ co-c.e.}\]
holds in some computable metric space, for example in computable Euclidean space $(\R^n,d,\alpha)$.

\begin{defi}
    Let $i, j \in \N$. 
    \begin{itemize}
        \item We say that $I_i$ and $I_j$ are \defn{formally disjoint} if
\[d(\lambda_i, \lambda_j) > \rho_i + \rho_j \text.\]
We denote this by $I_i \Jd I_j$.
    \item We say that $J_i$ and $J_j$ are \defn{formally disjoint} if $I_k \Jd I_l$ for all $k \in [i]$ and $l \in \uglj$. We denote this by $J_i \Jd J_j$.
    \end{itemize}  
\end{defi}
Note that $I_i \Jd I_j$ implies $\hat I_i \cap \hat I_j = \emptyset$. Similarly, $J_i \Jd J_j$ implies $\hat J_i \cap \hat J_j = \emptyset$.

\begin{defi}
     Let $i, j \in \N$. 
     \begin{itemize}
         \item We say that $I_i$ is \defn{formally contained} in $I_j$ and we write $I_i \sA I_j$ if 
        \[d(\lambda_i, \lambda_j) + \rho_i < \rho_j \text.\]
        \item We say that $J_{i}$ is \defn{formally contained} in $J_{j}$ and we write $J_i \sA J_j$ if 
        \[(\forall k \in [i])(\exists l \in [j])(I_k \sA I_l)\text.\]
        \item We say that $J_{[i]}$ is \defn{formally contained} in $J_{[j]}$ and we write $J_{[i]} \sA J_{[j]}$ if 
        \[(\forall k \in [i])(\exists l \in [j])(J_k \sA J_l)\text.\]
     \end{itemize}
\end{defi}

Note that $I_i \sA I_j$ implies $\hat I_i \subseteq I_j$ and $J_i \sA J_j$ implies $\hat J_i \subseteq J_j$.

Note also that formal containment and formal disjointness are relations between numbers $i$ and $j$ (and not between the sets $I_i$ and $I_j$ or $J_i$ and $J_j$), and that they are c.e.

Let \Xdalfa\ be a computable metric space. For every $j\in\N$ we define
\begin{equation}
\fdiam(j)\defj\diam\,\{\lambda_u\mid u\in\uglj\}+2\max\,\{\rho_u\mid u\in\uglj\}\text,
\end{equation} and call it the \defn{formal diameter} of $J_j$. Again, this is formally a function of $j$, not of $J_j$.

It is easy to prove that the function $\fdiam:\N\to\R$ is computable~\cite[Proposition 13]{jucs}.

\subsection{Chains}

Let \Xd\ be a metric space. A finite sequence $\mathcal C = (C_0, \dots, C_n)$ of nonempty subsets of $X$ is said to be a
\begin{itemize}
    \item \defn{quasi-chain} in $X$ if, for all $i,j$, $C_i\cap C_j\ne\emptyset$ implies $\abs{i-j}\le 1$;
    \item \defn{chain} in $X$ if,
    for all $i,j$, $C_i\cap C_j\ne\emptyset$ is equivalent to $\abs{i-j}\le 1$;
\end{itemize}
Each set $C_i$ is said to be the ($i$th) \defn{link} of the (quasi-\!)chain $\mathcal C$. If $i < j$, we say that the link $C_i$ \defn{precedes} $C_j$, and if $i + 1 < j$, we say that $C_i$ \defn{strictly precedes} $C_j$.

A (quasi-\!)chain is said to be \defn{open} in \Xd\ if each of its links is open in \Xd. Similarly, a (quasi-\!)chain is said to be \defn{compact} if each of its links is compact.

Suppose $S$ is a subset of $X$ and $\mathcal C = (C_0, \dots, C_n)$ is a (quasi-\!)chain in $X$. We say that $\mathcal C$ \defn{covers} $S$ if $S \subseteq C_0 \cup \dots \cup C_n$.

Suppose $\mathcal A = (A_0, \dots, A_n)$ is a finite sequence of nonempty bounded subsets of $X$. The \defn{mesh} of $\mathcal A$ is the number defined by
\[ \mesh (\mathcal A) \defj \max_{i=0}^n\,\diam A_i \text.\]

If $\mathcal C$ is a chain and $\mesh(\mathcal C) < \eps$, we say that $\mathcal C$ is an $\eps$-\defn{chain}. We say that a continuum (i.e.\ a compact and connected metric space) $X$ is \defn{chainable} if there is an open $\eps$-chain in $X$ covering $X$ for every $\eps > 0$. For $a, b \in X$, we say that a continuum $X$ is \defn{chainable from} $a$ to $b$ if, for every $\eps > 0$, there is an open $\eps$-chain $(U_0, \dots, U_n)$ in $X$ covering $X$ such that $a \in U_0$ and $b \in U_n$. 

Let $\mathcal C = (C_0,\dots,C_n)$ and $\mathcal D = (D_0,\dots,D_m)$ be two (quasi-\!)chains in $X$. We say that $\mathcal D$ \defn{refines} $\mathcal C$ if for each $i \in  \{0, \dots, m\}$ there exists $j \in \{0,\dots, n\}$ such that $D_i \subseteq C_j$. If $\mathcal D$ \defn{refines} $\mathcal C$ and $D_0 \subseteq C_0$ and $D_m \subseteq C_n$, we say that $\mathcal D$ \defn{strongly refines} $\mathcal C$.

Let \Xdalfa\ be a computable metric space and let $l \in\N$. We say that $\bigl( J_{j_0}, \dots, J_{j_m})$ is a \defn{formal chain} in \Xdalfa\ if $J_{j_u} \diamond J_{j_v}$
for all $u, v \in \{0, \ldots, m\}$ such that $\abs{u-v} > 1$. Note that every formal chain is a quasi-chain.

\subsection{Graphs}

Let $n\in \N\setminus \{0\}$ and let $a,v\in \R^n $, $v\ne 0$. The set $\{a+tv\mid t\in [0,\infty)\}$ is called a \defn{ray} in $\R^n $ and we say that $a$ is the (only) \defn{endpoint} of this ray. Let $K$ be a nonempty finite family of line segments and rays in $\R^n$ such that
\begin{equation}\label{eq:simplcompl-2}
\mbox{\parbox{10em}{
\centering
$I, J\in K$ are such that  \\ $I\ne J$
and $I\cap J\ne\emptyset$}}
\quad\implies\quad \mbox{\parbox{14em}{ 
$I\cap  J =\{a\}$, where $a$ is an \\
endpoint of both $I$ and $J$.
}}    
\end{equation}
Then any topological space $G$ homeomorphic to $\bigcup K$ is called a \defn{graph}.

Let $G$ be a graph and $x\in G$. We say that $x$ is an \defn{endpoint} of $G$ if $x$ has an open neighbourhood in $G$ which is homeomorphic to $[0,\infty\rangle $ by a homeomorphism which maps $x$ to $0$.

\begin{figure}[ht]
    \centering
    \begin{tikzpicture}
\draw[<-] (-2,6) -- (2,6) -- (3.5,6.5) -- (3,5) -- (4, 4.5);
\fill (3.5,7.5) circle (0.06);
\draw (2,6) -- (3,5);
\draw (5,6.5) -- (3.5, 7.5);
\fill (4,4.5) circle (0.06);
\fill (4.5,5) circle (0.06);

\draw (4.5, 5) -- (5, 6.5);
\draw[->] (5,6.5) -- (8.5,7);
\end{tikzpicture}
    \caption{A graph. Filled circles denote endpoints and arrows denote rays.}
    \label{fig:generalized-graph}
\end{figure}

Suppose $G$ is a graph and $f:\bigcup K\to G$ a homeomorphism, where $K$ is the family from the definition of a graph. Then:
\begin{itemize}
    \item $x\in G$ is an endpoint of $G$ if and only if there exists $y\in \bigcup K$ such that $f(y)=x$ and such that $y$ is an endpoint of a unique element of $K$;
    \item $G$ is compact if and only if $K$ does not contain rays.
\end{itemize}

Note that our naming scheme differs slightly from \cite{mlq-graphs}; what was previously called a \emph{generalized graph} we call a \emph{graph}, and what was previously called a \emph{graph} is now a \emph{compact graph}.

\subsection{Auxiliary results}
We will use the following results from the literature:
\begin{lemC}[\citeup{Lemma 3.8}{compintpoint}]\label{lm:links}
	Let $X$ be a set and let $(C_0,\dots,C_m)$ and
	$(D_0,\dots,D_{m'})$ be two quasi-chains in $X$ such that
	$(D_0,\dots,D_{m'})$ refines $(C_0,\dots,C_m)$.\\
	Suppose that $i,j,k\in\left\{0,\dots,m\right\}\!$ and $p,q\in
	\left\{0,\dots,m'\right\}\!$ are such that
	\[i<k<j\text,\quad p<q\text, \quad D_p\subseteq C_i \qand D_q\subseteq C_j\text.\]
Then there exists $r\in\left\{0,\dots,m'\right\}$ such that $p<r<q$ and $D_r\subseteq C_k$.
\end{lemC}

\begin{lemC}[\citeup{Lemma 41}{jucs}]\label{lm:presjek-lanaca} 
Let \Xd\ be a metric space in which every closed ball is compact.
Let $\mathcal C^k =(C^k_0,\dots,C^{k}_{m_k})$, $k\in \N$ be a sequence of chains such that for all $k\in\N$, $\bigl( \cl (C^{k+1}_{0}),\dots,\cl(C^{k+1}_{m_{k+1}}) \bigr)$ strongly refines $\left( C^k_0,\dots,C^k_{m_k}\right)$ and $\mesh(\mathcal C^k)<2^{-k}$.
Let 
\[S=\bigcap_{k\in \N}\left(\cl(C^{k+1}_0) \cup \dots \cup \cl(C^{k+1}_{m_{k+1}})\right)\text.\]
Then $S$ is a continuum, chainable from $a$ to $b$, where $a\in \bigcap_{k\in\N} C^{k}_{0}$ and
$b\in \bigcap_{k\in\N} C^k_{m_k}$.
\end{lemC}

\begin{thmC}[\citeup{Theorem 5.2}{apal}]\label{tm:IV-5-2}
    Suppose \Xdalfa\ is a computable metric space, $S$ a semicomputable set in this space and $x \in S$ a point which has a neighbourhood in $S$ homeomorphic to $\R^n$ for some $n \in \N \setminus\{0\}$. Then $x$ has a computable compact neighbourhood in $S$.
\end{thmC}

Theorem~\ref{tm:IV-5-2} essentially follows from the following two results: 
\begin{enumerate}
\item[(i)] if a set $S$ is compact and computable at $x$, then $x$ has a computable compact neighbourhood in $S$ \citeup{Theorem 3.7}{apal};
\item[(ii)] if $S$ is a semicomputable compact set and $x\in S$ has a neighbourhood in $S$ which is homeomorphic to $\R^n$ for some $n \in \N \setminus\{0\}$, then $S$ is computable at $x$ \citeup{Theorem~5.6}{lmcs:mnf}.
\end{enumerate}    
        That $S$ is computable at $x$ means that there exist a neighbourhood $N$ of $x$ in $S$ and a computable function $f:\N\to\N$ such that for each $k\in \N$ the following holds:
\begin{align*}
    \mbox{for each }x\in N\mbox{ there exists }i\in [f(k)]\mbox{ such that }d(x,\alpha _{i} )<2^{-k};\\
    \mbox{for each }i\in [f(k)]\mbox{ there exists }x\in S\mbox{ such that }d(\alpha _{i},x )<2^{-k}.
\end{align*}

The proof of result (ii) in the case $n=1$ relies on the following topological fact related to connectedness: if $a,b\in \mathbb{R}$, $a<b$, $f:[a,b]\to X$ is a continuous injection and $C_{0} ,\dots ,C_{m} $ is an open chain in $(X,d)$ such that $f([a,b])\subseteq C_{0} \cup \dots \cup C_{m} $, $f(a)\in C_{0} $ and $f(b)\in C_{m} $, then $C_{i} \cap f([a,b])\neq\emptyset$ for each $i\in \{0,\dots ,m\}$.

Let us mention here that chains have an important role in the investigation of effective properties of arcs (and chainable continua in general), see \cite{KOH_MELNIKOV_NG_2025, harrison2024arithmetic, jucs, lmcs:1mnf, compintpoint, tocs-20}.

\begin{thmC}[\citeup{Theorem 6.5}{mlq-graphs}]\label{tm:grafovi} Let \Xdalfa\ be a computable metric space and let $S$ be a semicomputable set in this space. Suppose $S$, as a subspace of \Xd, is a graph such that the set $E$ of all endpoints of $S$ is semicomputable in \Xdalfa. Then $S$ is computable in \Xdalfa.
\end{thmC}

\begin{propC}[\citeup{proposition 12}{jucs}]\label{prop:fdiam}
Let \Xdalfa\ be a computable metric space.
\begin{enumerate}[label=(\roman*)]
\item\label{it:fdiam1} For all $j\in\N$, $\diam \widehat{J}_{j}\le \fdiam(j).$
\item\label{it:fdiam2} Let $K$ and $U$  be subsets of \Xd\ such that $K$ is nonempty and compact, $U$ is open and $K\subseteq U.$ Let $\eps >0.$ Then there exists $j\in\N$ such that $K\subseteq J_j$, $\widehat J_j \subseteq U$  and  $\fdiam(j)< \diam K+\eps$.
\end{enumerate}
\end{propC}

\begin{lem}\label{lm:S-minus-J-semicomp}
Let \Xdalfa\ be a computable metric space and let $S$ be a semicomputable set in this space (not necessarily compact). Let $m\in\N$. Then $S\setminus J_m$ is a semicomputable set. 
\end{lem}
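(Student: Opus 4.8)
The plan is to verify the two defining conditions of semicomputability for $S\setminus J_m$ separately, reducing the harder one to the semicomputability of $S$ itself. The key structural fact I will exploit is that removing a fixed finite union of balls $J_m$ can be absorbed into the cover index on the right-hand side of the defining predicate.

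First I would dispatch condition (i). For any closed ball $B$ in \Xd\ we have $(S\setminus J_m)\cap B=(S\cap B)\cap(X\setminus J_m)$. Since $S$ is semicomputable, $S\cap B$ is compact, and $J_m$ is open, so $X\setminus J_m$ is closed; thus $(S\setminus J_m)\cap B$ is a closed subset of a compact set, hence compact. The heart of the proof is condition (ii), namely that $\{(i,j)\in\N^2\mid \hat I_i\cap(S\setminus J_m)\subseteq J_j\}$ is c.e. Here the essential observation is the equivalence
\[
\hat I_i\cap(S\setminus J_m)\subseteq J_j \quad\Longleftrightarrow\quad \hat I_i\cap S\subseteq J_j\cup J_m\text.
\]
This follows by a direct set-theoretic check: a point of $\hat I_i\cap S$ either lies in $J_m$, or, lying outside $J_m$, belongs to $\hat I_i\cap(S\setminus J_m)$ and so must be covered by $J_j$ precisely under the left-hand hypothesis.

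It then remains to observe that $J_j\cup J_m$ is again a finite union of rational open balls, namely $\bigcup_{k\in[j]\cup[m]}I_k$, so there is a computable function $g:\N^2\to\N$ with $J_{g(m,j)}=J_j\cup J_m$; concretely, $g$ concatenates the finite index sequences coding $[j]$ and $[m]$ and returns the code of the resulting sequence, using the fixed coding of finite sequences. Consequently,
\[
\{(i,j)\mid \hat I_i\cap(S\setminus J_m)\subseteq J_j\}=\{(i,j)\mid (i,g(m,j))\in A\}\text,
\]
where $A=\{(i,j')\mid \hat I_i\cap S\subseteq J_{j'}\}$ is c.e.\ by the semicomputability of $S$. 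Being the preimage of a c.e.\ set under the computable map $(i,j)\mapsto(i,g(m,j))$, the left-hand set is c.e., which establishes condition (ii).

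I do not expect a genuine obstacle here: once the equivalence above is in place, the whole statement reduces to closure of the semicomputability predicate of $S$ under a computable substitution in its second index. The only point needing (routine) care is the effective union of two finite unions of balls, i.e.\ verifying that $g$ is computable, which is where the fixed enumeration $j\mapsto[j]$ of finite subsets of $\N$ is used.
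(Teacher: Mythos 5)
Your proposal is correct and follows essentially the same route as the paper's proof: the same compactness argument for condition (i), the same key equivalence $\hat I_i\cap(S\setminus J_m)\subseteq J_j\Leftrightarrow \hat I_i\cap S\subseteq J_j\cup J_m$, and the same reduction via a computable function producing an index for $J_j\cup J_m$ (the paper simply posits such a function $f$, while you spell out its construction by concatenating the coded finite sequences). No gaps.
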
 
\begin{proof}
Let $\Omega =\{(i,j)\in \N^2\mid \hat I_i \cap S\subseteq J_j\}$.

If $B$ is a closed ball in \Xd, then $(S\setminus J_{m})\cap B$ is a closed set in \Xd\ contained in $S\cap B$. Since $S\cap B$ is compact, $(S\setminus J_{m})\cap B$ is compact too.

Let $i,j\in \N$. We have
\[\hat I_i \cap (S\setminus J_{m})\subseteq J_j\Longleftrightarrow (\hat I_i \cap S)\setminus J_{m} \subseteq J_j \Longleftrightarrow \hat I_i \cap S\subseteq J_j\cup  J_{m}\text.\]
Choose a computable function $f:\N^2\to\N$ such that $J_{a}\cup J_{b}=J_{f(a,b)}$ for all $a,b\in\N$. Then
\[\hat I_i \cap (S\setminus J_{m})\subseteq J_j\Longleftrightarrow \hat I_i \cap S\subseteq J_{f(j,m)}\Longleftrightarrow\bigl(i,f(j,m)\bigr)\in \Omega\text.\]
So the set of all $(i,j)\in \N^2$ such that $\hat I_i \cap (S\setminus J_{m})\subseteq J_j$ is the set of all $(i,j)\in \N^2$ such that $\bigl(i,f(j,m)\bigr)\in \Omega$, which is c.e.\ since $\Omega $ is c.e.\ and $f$ is computable.
\end{proof}

\begin{cor}\label{cor:exists-S'}
    Let \Xdalfa\ be a computable metric space, let $S$ be a semicomputable set in this space and let $K\!$ and $U\!$ be subsets of $S$ such that $K$ is compact, $U$ is open in $S$ and $K \subseteq U$. Then there exists a semicomputable compact set $S' \subseteq S$ such that $K \subseteq S' \subseteq U$. 
\end{cor}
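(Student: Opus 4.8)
The plan is to realize $S'$ as the intersection of $S$ with a suitable \emph{closed} formal region $\hat J_j$ that traps $K$ while staying inside $U$. First, since $U$ is open in the subspace $S$, I fix an open set $V$ in \Xd\ with $U = V \cap S$; then $K \subseteq U \subseteq V$. If $K=\emptyset$ I simply take $S'=\emptyset$, which is compact and (trivially) semicomputable and satisfies $K\subseteq S'\subseteq U$, so I may assume $K\ne\emptyset$.

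Applying Proposition~\ref{prop:fdiam}\ref{it:fdiam2} to the nonempty compact set $K$ and the open set $V\supseteq K$ (with any $\eps>0$), I obtain $j\in\N$ such that $K\subseteq J_j$ and $\hat J_j\subseteq V$. I then put
\[ S' \defj S \cap \hat J_j \text. \]
Because $\hat J_j=\bigcup_{i\in[j]}\hat I_i$ is a finite union of closed balls and each $S\cap\hat I_i$ is compact (by clause~(i) of the definition of a semicomputable set), $S'=\bigcup_{i\in[j]}(S\cap\hat I_i)$ is a finite union of compact sets, hence compact. The required containments are immediate: $K\subseteq S$ together with $K\subseteq J_j\subseteq\hat J_j$ gives $K\subseteq S'$, while $\hat J_j\subseteq V$ gives $S'=S\cap\hat J_j\subseteq S\cap V=U$.

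The crux is semicomputability of the compact set $S'$, i.e.\ that $\{l\in\N\mid S'\subseteq J_l\}$ is c.e. Here I exploit the decomposition $S'=\bigcup_{i\in[j]}(\hat I_i\cap S)$, so that for every $l$,
\[ S'\subseteq J_l \iff (\forall i\in[j])\,(\hat I_i\cap S\subseteq J_l)\text. \]
Each condition $\hat I_i\cap S\subseteq J_l$ asserts membership of $(i,l)$ in the set $\{(i,l)\in\N^2\mid \hat I_i\cap S\subseteq J_l\}$, which is c.e.\ precisely because $S$ is semicomputable (clause~(ii) of its definition). Since $[j]$ is finite, the displayed conjunction is a c.e.\ condition on $l$, so $\{l\mid S'\subseteq J_l\}$ is c.e.\ and $S'$ is semicomputable.

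The only genuine obstacle is this last step: intersecting a semicomputable set with an arbitrary closed set need not preserve semicomputability. What makes it work is that $\hat J_j$ is assembled from the \emph{basic} closed balls $\hat I_i$, so the covering test for $S'$ factors, ball by ball, through the given c.e.\ relation witnessing semicomputability of $S$. Choosing the \emph{closed} region $\hat J_j$ (with $\hat J_j\subseteq V$) rather than the open $J_j$ is exactly what simultaneously secures compactness of $S'$ and the inclusion $S'\subseteq U$.
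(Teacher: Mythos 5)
Your proof is correct, and it takes a genuinely different route from the paper's. The paper first traps $K$ in a single rational closed ball, setting $L \defj \hat I_i \cap S$, and then \emph{subtracts} an open rational region: after a case split on whether $L \setminus U$ is empty, it applies Proposition~\ref{prop:fdiam}\ref{it:fdiam2} to the nonempty compact set $L \setminus U$ inside an open $V$ with $V \cap L = L \setminus K$, obtains $j$ with $L \setminus U \subseteq J_j \subseteq V$, and defines $S' \defj L \setminus J_j$, delegating semicomputability to Lemma~\ref{lm:S-minus-J-semicomp}, which was proved immediately beforehand for this purpose. You instead \emph{intersect}: you apply Proposition~\ref{prop:fdiam}\ref{it:fdiam2} directly to $K$ inside an open $V$ with $U = V \cap S$, obtain a closed region $\hat J_j \subseteq V$ containing $K$, set $S' \defj S \cap \hat J_j$, and verify semicomputability by hand, factoring the covering test $S' \subseteq J_l$ into the finite conjunction of the conditions $\hat I_i \cap S \subseteq J_l$ for $i \in [j]$, each an instance of the c.e.\ relation from clause (ii) of the definition. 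This section-of-a-c.e.-relation observation is, incidentally, also what the paper needs implicitly to pass from Lemma~\ref{lm:S-minus-J-semicomp} (which only yields that $S \setminus J_j$ is semicomputable in the noncompact sense) to compact-sense semicomputability of $L \setminus J_j = \hat I_i \cap (S \setminus J_j)$; you make it explicit. What your route buys: it is shorter and self-contained --- no case analysis and no subtraction lemma --- though it forces you to treat $K = \emptyset$ separately, which you correctly do, since Proposition~\ref{prop:fdiam}\ref{it:fdiam2} requires nonemptiness, whereas the paper's case split on $L \setminus U$ absorbs that. What the paper's route buys: Lemma~\ref{lm:S-minus-J-semicomp} is needed independently later (in the proof of Lemma~\ref{lm:izrezivanje}), so its subtraction argument reuses machinery the paper must develop anyway.
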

\begin{proof}
    Since $K$ is compact, it is fully contained in a rational closed ball with a sufficiently large diameter. Let $i \in \N$ be such that $K \subseteq L\defj \hat I_i \cap S$.

    Consider the set $L \setminus U$. If it is empty, then $L \subseteq U$, so $S' \defj L$ is a semicomputable compact set such that $K \subseteq S' \subseteq U$.

    Suppose $L \setminus U$ is not empty. Then it is a closed subset of a compact set $L$, therefore it is compact, and $L \setminus U \subseteq L \setminus K$.
    Since $L \setminus K$ is open in $L$, let $V$ be an open set in \Xd\ such that $V \cap L = L \setminus K$. Now, $L \setminus U$ and $V$ are subsets of \Xd\ such that $L \setminus U$ is nonempty and compact, $V$ is open and $L \setminus U \subseteq V$. By Proposition~\ref{prop:fdiam}\ref{it:fdiam2}, there exists $j \in \N$ such that $L \setminus U \subseteq J_j \subseteq V$.
Then $S' \defj L \setminus J_j$ is semicomputable compact and
\begin{gather*}
    L \setminus V \subseteq S' \subseteq L\setminus \bigl( L \setminus U\bigr) \text.\\
\shortintertext{Since}
    K = L \setminus \bigl(L \setminus K\bigr) = L \setminus \bigl(V \cap L\bigr) = L \setminus V\qand
    L\setminus \bigl( L \setminus U\bigr) \subseteq U \text,
\end{gather*}
we conclude that $K \subseteq S' \subseteq U$.
\end{proof}

\begin{propC}[\citeup{Lemma 4.6}{lmcs:1mnf}]\label{prop:separator}
Let \Xdalfa\ be a computable metric space and let
$A,B$ be disjoint nonempty compact sets in \Xd. Then
\begin{enumerate}[label=(\roman*)]
    \item\label{it:sep1} For each $\eps > 0$ there exists $j \in \N$ such that $A \subseteq_\eps J_j$.
    \item\label{it:sep2} There exists $\mu >0$ such that for all $i,j\in\N$, if $A\subseteq_\mu J_i$ and $B\subseteq_\mu J_j$, then $J_i\Jd J_j$.
\end{enumerate}
\end{propC}

\begin{lemC}[\citeup{Lemma 4.8}{lmcs:1mnf}]\label{lm:r-diam}
    Let \Xdalfa\ be a computable metric space. Let $A \subseteq X$, $j \in \N$ and $r > 0$ be such that $A \subseteq_r J_j$. Then
    \[ \fdiam(j) < 4r + \diam A \text.\]
\end{lemC}

\begin{propC}[\citeup{Proposition 4.13}{compintpoint}]\label{prop:augmentator}
Let \Xdalfa\ be a computable metric space, $A\subseteq\N$ a c.e.\ set, $K$ a nonempty compact set such that $K\subseteq\bigcup_{i\in A} I_i$.
 Then there exists
$\mu >0$ such that for all $i\in\N$, if $K\subseteq_\mu J_j$, then
\[(\forall u \in [j])(\exists v \in A)(I_u \sA I_v) \text.\]
\end{propC}

\begin{propC}[\citeup{Proposition 6.5.}{compintpoint}]\label{prop:profinjenje}
    Let \Xdalfa\ be a computable metric space. Let $l, l', u, v \in \N$. The following statements hold:
\begin{enumerate}[label=(\roman*)]
    \item\label{it:prof1} if $J_u \sA J_v$, then $\cl (J_u) \subseteq J_v$;
    \item\label{it:prof2} if $J_\ugll \sA J_\ugllc$, then the finite sequence $\bigl(\cl (J_{(l)_0}), \dots, \cl (J_\zadnji l) \bigr)$ refines the finite sequence $\bigl(J_{(l')_0}, \dots, J_\zadnji{l'}\bigr)$.
\end{enumerate}
\end{propC}

\section{Existence of a computable neighbourhood}

The main result of this section is Theorem~\ref{tm:compneigh} which states that any point in a semicomputable set which has a neighbourhood $N$ homeomorphic to $\R$ also has a computable neighbourhood $N' \subseteq N$ which is an arc with computable endpoints.

Note that Theorem~\ref{tm:IV-5-2} has a similar statement: any point $x$ in a semicomputable set which has a neighbourhood homeomorphic to $\R^n$ also has a computable neighbourhood. However, it does not guarantee that this computable neighbourhood will be an arc with computable endpoints. Since computable points are dense in computable sets, there is a neighbourhood of $x$ which is an arc with computable endpoints, but such neighbourhood is not necessarily computable (or even semicomputable).

In order to construct a neighbourhood which is simultaneously computable and an arc with computable endpoints, our approach is to look at the intersection of a sequence of chains whose links have strictly decreasing diameters.

We begin by listing some important auxiliary results. 
\begin{lem}\label{lm:augmentator-Jj}
Let \Xdalfa\ be a computable metric space, $K$ a nonempty compact set,\\ and $a\in\N$ such that $K\subseteq J_a$.

Then there exists $\mu>0$ such that for all $j\in\N$, if $K\subseteq_\mu J_j$, then $J_j\sA J_a$.
\end{lem}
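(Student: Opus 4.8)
The plan is to recognize this lemma as precisely the special case of Proposition~\ref{prop:augmentator} in which the c.e.\ set $A$ is taken to be the finite index set $[a]$. Indeed, the conclusion $J_j \sA J_a$ we are after is, by the definition of formal containment between unions of balls, the statement $(\forall k \in [a])$\dots{} more precisely $(\forall u \in [j])(\exists v \in [a])(I_u \sA I_v)$, which is exactly the shape of the conclusion that Proposition~\ref{prop:augmentator} delivers. So the whole task reduces to checking that $[a]$ is an admissible choice for the set $A$ in that proposition.

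First I would set $A \defj [a] = \{(a)_0, \dots, \zadnji a\}$. This is a nonempty finite subset of $\N$, hence c.e., and by the definition of $J_a$ we have $\bigcup_{i \in [a]} I_i = J_a$. Since the hypothesis of the lemma gives $K \subseteq J_a$, it follows that $K \subseteq \bigcup_{i \in A} I_i$, so all the hypotheses of Proposition~\ref{prop:augmentator} are met for this choice of $A$. Applying that proposition then yields a $\mu > 0$ such that, for every $j \in \N$, whenever $K \subseteq_\mu J_j$ we have $(\forall u \in [j])(\exists v \in [a])(I_u \sA I_v)$.

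Finally I would observe that this last conclusion is verbatim the definition of $J_j \sA J_a$, so the same $\mu$ witnesses the claim and the proof is complete. There is no genuine obstacle here: the entire content is the two elementary observations that a finite index set is in particular c.e.\ and that $\bigcup_{i \in [a]} I_i$ is exactly $J_a$, after which the conclusion of Proposition~\ref{prop:augmentator} unfolds directly into the desired formal containment. The only point warranting a moment's care is matching the quantifier form $(\forall u \in [j])(\exists v \in [a])(I_u \sA I_v)$ against the definition of $J_j \sA J_a$, which is immediate.
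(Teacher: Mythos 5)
Your proposal is correct and is exactly the paper's proof: the paper also derives the lemma as the immediate special case $A = [a]$ of Proposition~\ref{prop:augmentator}, and your two checks (that $[a]$ is finite, hence c.e., with $\bigcup_{i \in [a]} I_i = J_a$, and that the proposition's conclusion unfolds verbatim into $J_j \sA J_a$) spell out precisely what the paper leaves implicit.
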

\begin{proof}
    This follows immediately from Proposition~\ref{prop:augmentator} for $A = [a]$.
\end{proof}

\begin{lem}\label{lm:napuhavanje}
Let \Xdalfa\ be a computable metric space, $(K_0, \dots, K_{n+1})$ a compact quasi-chain in \Xd, and $A\subseteq\N$ a finite set. Then for all $\eps>0$, there exist $p, l, q\in\N$ such that $\kraj l=n-1$ and
\begin{enumerate}[label=(\alph*)]
\item\label{it:napuh-a} $K_0\subseteq_\eps J_p$, $K_{n+1}\subseteq_\eps J_q$ and
$K_i\subseteq_\eps J_{(l)_{i-1}}$ for all $i \in \{1, \dots, n\}$;
\item\label{it:napuh-b} $\bigl(J_p, J_{(l)_0}, \dots, J_{\zadnji l}, J_q\bigr)$ is a formal chain;
\item\label{it:napuh-c} for all $a\in A$, $K_0\subseteq J_a$ implies $J_p\sA J_a$, $K_{n+1}\subseteq J_a$ implies $J_q\sA J_a$ and $K_i\subseteq J_a$ implies $J_{(l)_{i-1}}\sA J_a$, for all $i \in \{1, \dots, n\}$.
\end{enumerate}
\end{lem}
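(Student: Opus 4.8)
The plan is to \emph{inflate} each compact link $K_i$ into a formal finite union of balls $J_{j_i}$ that approximates it within $\eps$, and then to observe that the three requirements (a)--(c) are each a consequence of taking the approximating radius small enough. The point is that all the finitely many obstructions we must avoid — non-adjacent links must become formally disjoint, and whenever a link already sits inside some $J_a$ its inflation must become formally contained in $J_a$ — are governed by threshold results already available to us. So the whole argument reduces to choosing one $\mu$ with $0 < \mu \le \eps$ small enough, and applying the approximation of compact sets with that single $\mu$.

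First I would harvest the thresholds. For every pair $u, v \in \{0, \dots, n+1\}$ with $\abs{u-v} > 1$, the quasi-chain property gives $K_u \cap K_v = \emptyset$, so Proposition~\ref{prop:separator}\ref{it:sep2} yields a $\mu_{u,v} > 0$ such that $K_u \subseteq_{\mu_{u,v}} J$ and $K_v \subseteq_{\mu_{u,v}} J'$ force $J \Jd J'$. Likewise, for every $i \in \{0, \dots, n+1\}$ and every $a \in A$ with $K_i \subseteq J_a$, Lemma~\ref{lm:augmentator-Jj} yields a $\mu_{i,a} > 0$ such that $K_i \subseteq_{\mu_{i,a}} J$ forces $J \sA J_a$. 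Since the quasi-chain and $A$ are both finite, there are finitely many such thresholds; I set $\mu$ to be the minimum of $\eps$ together with all of them, so $\mu > 0$.

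Next, for each $i \in \{0, \dots, n+1\}$ I use Proposition~\ref{prop:separator}\ref{it:sep1} (the approximation of a nonempty compact set by a formal union of small balls) to pick $j_i \in \N$ with $K_i \subseteq_\mu J_{j_i}$. The verification now rests on the monotonicity of $\subseteq_\eps$ in its subscript: since only the radius bound $\rho < \eps$ depends on $\eps$, the relation $K \subseteq_\mu J$ implies $K \subseteq_{\mu'} J$ for every $\mu' \ge \mu$. Because $\mu \le \eps$, condition (a) holds with $K_i \subseteq_\eps J_{j_i}$. For condition (b), non-adjacent indices $u, v$ satisfy $K_u \subseteq_{\mu_{u,v}} J_{j_u}$ and $K_v \subseteq_{\mu_{u,v}} J_{j_v}$, whence $J_{j_u} \Jd J_{j_v}$, and the sequence is a formal chain. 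For condition (c), whenever $K_i \subseteq J_a$ with $a \in A$ we have $K_i \subseteq_{\mu_{i,a}} J_{j_i}$ and therefore $J_{j_i} \sA J_a$. Finally I set $p \defj j_0$, $q \defj j_{n+1}$, and choose $l \in \N$ encoding the sequence $(j_1, \dots, j_n)$, so that $\kraj l = n-1$ and $(l)_{i-1} = j_i$ for $i \in \{1, \dots, n\}$; such an $l$ exists because every nonempty finite sequence in $\N$ is some $[l]$.

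I do not expect a genuine obstacle here: the only thing to watch is that the three families of constraints are mutually compatible, which is exactly why they are packaged as \emph{thresholds} — each says ``sufficiently fine approximations work'' — so intersecting finitely many of them (together with the cap $\mu \le \eps$) still leaves a positive $\mu$ that serves all of them at once. The mildly delicate point worth stating explicitly is the monotonicity of $\subseteq_\eps$, since this is precisely what lets a single choice of $\mu$ discharge (a), (b) and (c) simultaneously.
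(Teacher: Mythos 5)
Your proposal is correct and matches the paper's own proof essentially step for step: the same thresholds from Proposition~\ref{prop:separator}\ref{it:sep2} and Lemma~\ref{lm:augmentator-Jj}, the same single minimum $r=\mu\le\eps$, the same application of Proposition~\ref{prop:separator}\ref{it:sep1}, the same monotonicity of $\subseteq_\eps$ in its subscript, and the same encoding of $(j_1,\dots,j_n)$ as $\bigl((l)_0,\dots,\zadnji l\bigr)$. Nothing is missing.
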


\begin{proof}
    Fix $i, j \in \{0, \dots, n+1\}$ such that $\abs{i-j} > 1$.
    Since $(K_0, \dots, K_{n+1})$ is a compact quasi-chain, $K_i$ and $K_j$ are disjoint nonempty compact sets in \Xd. By Proposition~\ref{prop:separator}\ref{it:sep2}, there exists $\mu_{i, j} > 0$ such that for any $\alpha, \beta \in \N$,
    \begin{equation}\label{eq:napuhavanje-disj}
        K_i \subseteq_{\mu_{i, j}} J_\alpha \qand K_j \subseteq_{\mu_{i, j}} J_\beta \Longrightarrow J_\alpha \Jd J_\beta \text.
    \end{equation}

    Now fix $i \in \{0, \dots, n+1\}$ and $a \in A$ such that $K_i \subseteq J_a$. By Lemma~\ref{lm:augmentator-Jj}, there exists $\mu_{i, a} > 0$ such that for any $\alpha \in \N$,
    \begin{equation}\label{eq:napuhavanje-cont}
        K_i \subseteq_{\mu_{i, a}} J_\alpha \Longrightarrow J_\alpha \sA J_a \text. 
    \end{equation}

    Let 
    \begin{multline*}
     r \defj \min \bigl( \bigl\{\mu_{i, j} \bigm| i, j \in \{0, \dots, n+1\},\ \abs{i-j} > 1\bigr\} \ \cup  \\
     \bigl\{ \mu_{i, a} \bigm| i \in \{0, \dots, n+1\},\ a \in A,\ K_i \subseteq J_a\bigr\} \cup \{ \eps\} \bigr)   
    \end{multline*}

    By Proposition~\ref{prop:separator}\ref{it:sep1}, there exist $p, j_1, \dots j_n, q \in \N$ such that
    \begin{equation}\label{eq:napuhavanje-exp}
        K_0 \subseteq_r J_p,\ K_1 \subseteq_r J_{j_1}, \dots, K_n \subseteq_r J_{j_n} \text{ and } K_{n+1} \subseteq_r J_q \text.
    \end{equation}

    Let $l \in \N$ be such that $(j_1, \dots, j_n) = \bigl((l)_0, \dots, \zadnji l\bigr)$.
    Since $K \subseteq_{\lambda'} J_j$ implies $K \subseteq_\lambda J_j$ for any $\lambda' < \lambda$, we can now easily conclude that~\ref{it:napuh-a} follows from~\eqref{eq:napuhavanje-exp},~\ref{it:napuh-b} follows from~\eqref{eq:napuhavanje-disj}, and~\ref{it:napuh-c} follows from~\eqref{eq:napuhavanje-cont}.
\end{proof}

Any topological space homeomorphic to $[0,1]$ is called an \defn{arc}. If $A$ is an arc and $x\in A$ a point such that $A\setminus \{x\}$ is connected, then we say that $x$ is an \defn{endpoint} of $A$. Note that if $f:[0,1]\to A$ is a homeomorphism, then $f(0)$ and $f(1)$ are the only endpoints of $A$.

\begin{thm}\label{tm:compneigh}
    Let $S$ be a semicomputable set in a computable metric space \Xdalfa. Suppose a point $x \in S$ has an open neighbourhood $N$ in $S$ which is homeomorphic to $\R$. Then there exist computable points $a, b \in N$ and a computable neighbourhood $N' \subseteq N$ of $x$ in $S$ which is an arc from $a$ to $b$. 
\end{thm}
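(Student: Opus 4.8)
The plan is to first produce a computable compact neighbourhood and then pin the two endpoints to computable points. Applying Theorem~\ref{tm:IV-5-2} with $n=1$, the point $x$ has a computable compact neighbourhood $M$ in $S$. Fix a homeomorphism $\phi\colon\R\to N$ with $\phi(0)=x$, and choose a closed ball around $x$ so small that, inside it, $S$ coincides with an open sub-arc of $N$ and stays within $\operatorname{int}_S M$. Because computable points are dense in the computable set $M$, I can then select computable points $a=\phi(s_a)$ and $b=\phi(s_b)$ with $s_a<0<s_b$ lying on opposite sides of $x$ on this sub-arc; taking them close enough to $x$, the closed sub-arc $N'\defj\phi([s_a,s_b])$ satisfies $N'\subseteq N\cap\operatorname{int}_S M$ and contains $x$ in its interior. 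The set $N'$ is the claimed arc, and the whole difficulty is to show that it is computable.

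To that end I would approximate $N'$ by an effectively generated, nested sequence of formal chains. Cutting $N'$ into consecutive short sub-arcs gives, for each $k$, a compact chain $\mathcal C^k$ in $S$ covering $N'$ with $\mesh(\mathcal C^k)<2^{-k}$, whose first link contains $a$, whose last link contains $b$, and which refines the previous cut. Feeding $\mathcal C^k$ — together with the indices of the links of the previous formal chain as the finite set $A$ — into Lemma~\ref{lm:napuhavanje} inflates it into a formal chain $\mathcal J^k=\bigl(J_{p_k},J_{(l_k)_0},\dots,J_{\zadnji{l_k}},J_{q_k}\bigr)$ whose links each meet $N'$, have radii below $2^{-k}$, contain $a$ (first) and $b$ (last), and, by condition~\ref{it:napuh-c} together with Proposition~\ref{prop:profinjenje}, whose closures strongly refine $\mathcal J^{k-1}$. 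Since the defining relations of a formal chain, of formal containment $\sA$ and of formal disjointness $\Jd$ are all c.e.\ and $\fdiam$ is computable, a dovetailed search over all such tuples yields a \emph{computable} sequence $k\mapsto\mathcal J^k$; here the semicomputability of $S$ is what certifies that a candidate chain really runs along the part of $S$ that constitutes $N'$.

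The closed formal chains are nested with $\mesh\to 0$, so Lemma~\ref{lm:presjek-lanaca} identifies their intersection as a continuum chainable from a point of $\bigcap_k J^k_0$ to a point of $\bigcap_k J^k_{m_k}$; as the first and last links were forced to shrink onto $a$ and $b$, these points are exactly $a$ and $b$, and because the construction never leaves $N\cong\R$ the intersection is precisely the sub-arc $N'$ (with Lemma~\ref{lm:links} excluding stray links that would wander off the arc). For computability, note that each level gives $N'\subseteq_{2^{-k}}J_{j_k}$ for the union $J_{j_k}$ of the links of $\mathcal J^k$, so $N'\approx_{2^{-k}}\{\lambda_i\mid i\in[j_k]\}$ with $k\mapsto j_k$ computable, and $N'$ is computable directly from the definition; alternatively, the nesting makes $\{j\mid N'\subseteq J_j\}$ c.e., so $N'$ is semicomputable, and being an arc whose endpoint set $\{a,b\}$ is computable, it is computable by Theorem~\ref{tm:grafovi}.

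The main obstacle is the effectivity of this construction rather than its topology. Lemma~\ref{lm:napuhavanje} only asserts the \emph{existence} of an approximating formal chain, so the work lies in guaranteeing that a search constrained only by the c.e.\ formal-chain, strong-refinement and small-$\fdiam$ conditions is certain both to succeed and to return chains whose nested intersection equals $N'$ and nothing larger. Ruling out the ``nothing larger'' — that no link can drift onto another branch of $S$ or onto the arc beyond $a$ or $b$ — is exactly where the semicomputability of $S$, the computability of $M$ and the local Euclidean structure around the arc must be used together, and maintaining the two end links pinned to the prescribed computable points $a$ and $b$ through every refinement is the delicate bookkeeping that the formal machinery of Lemma~\ref{lm:napuhavanje}\ref{it:napuh-c} is designed to handle.
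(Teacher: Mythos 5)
There is a genuine gap, and it sits exactly where you locate ``the main obstacle'': your dovetailed search has no c.e.\ certificate for its key conditions. You fix the endpoints first --- computable $a=\phi(s_a)$, $b=\phi(s_b)$ --- and define $N'\defj\phi([s_a,s_b])$, so the task becomes proving that \emph{this particular} set is computable; your search then demands that candidate formal chains have ``links each meeting $N'$'' and that the nested intersection be ``$N'$ and nothing larger.'' But a condition of the form $J_i\cap N'\ne\emptyset$ (or even $J_i\cap S\ne\emptyset$) is a \emph{positive} condition, c.e.\ only when the set in question is c.e.; $S$ is merely semicomputable, and $N'$ is the very set whose computability is at stake, so ``the semicomputability of $S$ is what certifies that a candidate chain really runs along the part of $S$ that constitutes $N'$'' is false as stated --- semicomputability yields only covering certificates of the form $\hat I_i\cap S\subseteq J_j$. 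The computable neighbourhood $M$ from Theorem~\ref{tm:IV-5-2} does not repair this: it has no known structure (it need not be connected, an arc, or contained in $N$), and its computability gives you the points $a,b$ but no effective way to confirm that a chain tracks the sub-arc between them rather than drifting within $S$. Your fallback --- ``the nesting makes $\{j\mid N'\subseteq J_j\}$ c.e., so $N'$ is semicomputable, hence computable by Theorem~\ref{tm:grafovi}'' --- is circular, since that nesting is the output of the search whose effectivity is unestablished.

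The paper resolves this with two ideas absent from your proposal. First, Corollary~\ref{cor:exists-S'} manufactures a \emph{semicomputable compact} set $S'$ with $f([-3,3])\subseteq S'\subseteq f(\langle-4,4\rangle)$, so that the only positive condition in the search is the c.e.\ covering condition $S'\subseteq J_p\cup J_\ugll\cup J_q$, alongside the c.e.\ formal-chain condition and membership of two \emph{auxiliary} computable points $\tilde a\in J_p$, $\tilde b\in J_q$; a purely topological argument (Claim~2: connectedness of $f([t_{\tilde a},t_{\tilde b}])\subseteq S'$) then shows that any tuple passing these tests \emph{automatically} induces a genuine open chain in $S'$ --- no ``meets'' certificate is ever checked. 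Second, the endpoints are not prescribed in advance but are \emph{outputs} of the construction: Lemma~\ref{lm:presjek-lanaca} produces $a$ and $b$ as the points onto which the first and last links shrink, and they are computable because the sequence $(l_n)$ is recursive and $\fmesh(l_n)<2^{-n}$ --- the theorem only asserts that \emph{some} computable endpoints exist, not that an arbitrary pre-chosen computable pair bounds a computable sub-arc. Finally, the totality of the refinement step (that the search is ``certain to succeed'' at every stage) is itself nontrivial: it is the paper's Claim~3, proved via a Lebesgue number of the cover together with the bookkeeping of Lemma~\ref{lm:links} to extract a strongly refining subchain before applying Lemma~\ref{lm:napuhavanje} with $A=\ugll$; you correctly identify this as the crux but leave it, and with it the proof, unresolved.
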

\begin{proof}
    Let $N$ be an open neigborhood of $x$ in $S$ and let $f : \R \to N$ be a homeomorphism. Without loss of generality (we can always compose $f$ with a translation by $f^{-1}(x)$), we may assume $f(0)=x$. 

    Consider the subsets $f([-3, 3])$ and $f(\langle -4, 4 \rangle)$ of $S$. The set $f(\langle -4, 4\rangle)$ is a continuous image of an open set, so it is open in $N$ and therefore open in $S$. The set $f([-3, 3])$ is a nonempty compact subset of $f(\langle -4, 4 \rangle)$. By Corollary~\ref{cor:exists-S'}, there exists a semicomputable compact set $S'$ such that 
    \begin{equation}\label{eq:def-S'}
        f([-3, 3]) \subseteq S' \subseteq f(\langle -4, 4 \rangle) \text.
    \end{equation}

     Since $f^{-1}$ is continuous and $f([-4, 4])$ is compact, we can choose $\eps > 0$ such that 
    \begin{equation}\label{eq:def-eps}
    d\bigl(f(s), f(t)\bigr) < \eps \Longrightarrow \abs{s-t} < \frac12, \quad \text{for all } s, t \in [-4, 4]\text.
    \end{equation}
    We may additionally assume $\eps < 1$.

    The set $f(\langle -3, 3 \rangle)$ is a neighbourhood of $f(-2)$ in $S'$ homeomorphic to $\R$. By Theorem~\ref{tm:IV-5-2}, there is a computable neighbourhood $N_{\tilde a}$ of $f(-2)$ in $S'$. Similarly, there is a computable neighbourhood $N_{\tilde b}$ of $f(2)$ in $S'$.

    Since computable points are dense in computable sets,
    we can find computable points (see Figure~\ref{fig:a-b-kugle})
    \begin{equation}\label{eq:def-a-b-tilde}
        \tilde a \in B\bigl(f(-2), \eps\bigr) \cap N_{\tilde a} \qand 
        \tilde b \in B\bigl(f(2), \eps\bigr) \cap N_{\tilde b} \text.
    \end{equation}

    \begin{figure}[ht]
        \centering
             \begin{tikzpicture}[scale=0.92]
 \draw[
	orange!50!white, 
	line width = 3pt, 
	line cap=round, 
	smooth, 
	domain=0.5:3.5, 
	variable=\t] 
plot ({\t}, {sin(0.75*\t r)});
\draw[
	orange!50!white,
	line width = 3pt, 
	line cap=round, 
	smooth, 
	domain=6:10, 
	variable=\t] 
plot ({\t}, {sin(0.75*\t r)});
             
\draw[smooth, domain=0:10.75, variable=\t] plot ({\t}, {sin(0.75*\t r)});
\draw[dashed, smooth, domain=-0.75:0, variable=\t] plot ({\t}, {sin(0.75*\t r)});
\draw[dashed, smooth, domain=10.75:11.5, variable=\t] plot ({\t}, {sin(0.75*\t r)});

\draw[dotted,thick] (2,{sin(0.75*2 r)}) circle (1.3);
\draw[help lines,->] (2,{sin(0.75*2 r)}) to (2.92,{sin(0.75*2 r) + 0.92}) node [xshift=-6,yshift=-13] {$\eps$};

\draw[dotted,thick] (8,{sin(0.75*8 r)}) circle (1.3);
\draw[help lines,->] (8,{sin(0.75*8 r)}) to (8.92,{sin(0.75*8 r) - 0.92}) node [xshift=-6,yshift=13] {$\eps$};

\filldraw[fill=white] (1.25,{sin(0.75*1.25 r)}) node[label=$\tilde a$] {} circle (0.05);
\filldraw[fill=white] (2,{sin(0.75*2 r)}) node[label=$f(-2)$, yshift=-8mm] {}  circle (0.05);

\filldraw[fill=white] (7.25,{sin(0.75*7.25 r)}) node[label=$\tilde b$] {} circle (0.05);
\filldraw[fill=white] (8,{sin(0.75*8 r)}) node[label=$f(2)$, xshift=-1mm] {} circle (0.05);

\filldraw[fill=white] (4.5,{sin(0.75*4.5 r)}) node[label=$x$, xshift=1mm] {} circle (0.05);

\node[orange] at (3.75, 0.8) {$N_{\tilde a}$}; 
\node[orange] at (9.25, 1) {$N_{\tilde b}$}; 

     \end{tikzpicture}
        \caption{The choice of points $\tilde a$ and $\tilde b$ from $S'$.}
        \label{fig:a-b-kugle}
    \end{figure}

    Note that $\tilde a, \tilde b \in S' \subseteq f(\langle -4, 4\rangle)$, so $\tilde a = f(t_{\tilde a})$ and $\tilde b = f(t_{\tilde b})$ for some $t_{\tilde a}, t_{\tilde b} \in \langle -4, 4\rangle$. Moreover, by~\eqref{eq:def-eps}, we have
    \begin{equation}\label{eq:a-b-ograde}
        t_{\tilde a} \in \langle -2.5, -1.5\rangle \qand 
        t_{\tilde b} \in \langle 1.5, 2.5\rangle \text.
    \end{equation}

    For $(p, l, q) \in \N^3$, we consider the following statements:
    \begin{enumerate}[label=(O\arabic*)]
        \item\label{it:o1} $S' \subseteq J_p \cup J_\ugll \cup J_q$;
        \item\label{it:o2} $(J_p, J_{(l)_0}, \dots, J_{\zadnji l}, J_q)$ is a formal chain;
        \item\label{it:o3} $\tilde a \in J_p$;
        \item\label{it:o4} $\tilde b \in J_q$.
    \end{enumerate}

    Each of these relations is c.e.\ (which follows from, respectively, the semicomputability of $S'$, the computable enumerability of formal disjointness, the computability of $\tilde a$ and the computability of $\tilde b$), so the set
    \[ \Omega \defj \bigl\{l \in \N \bigm| (\exists p, q \in \N)\bigl(\text{\ref{it:o1}--\ref{it:o4} hold for } (p, l, q)\bigr) \bigr\} \]
    is also c.e.\ as a projection of their intersection.

    \begin{claim}\label{cl:1}
        There exists $l \in \Omega$ such that $\fmesh(l) < \frac\eps2$, $d\bigl(\tilde a, J_{(l)_0}\bigr) < \frac\eps2$ and $d\bigl(\tilde b, J_{\zadnji l}\bigr) < \frac\eps2$.
    \end{claim}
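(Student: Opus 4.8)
The plan is to build a compact quasi-chain in \Xd\ whose first and last links absorb the two tails of $S'$ (the parts lying beyond $\tilde a$ and $\tilde b$) while its middle links finely cover the central arc, and then to inflate it into a formal chain using Lemma~\ref{lm:napuhavanje}. The crucial observation is that none of \ref{it:o1}--\ref{it:o4}, nor the three smallness conditions, constrain the diameters of the end-links $J_p$ and $J_q$: only $\fmesh(l)$, which measures the middle links, must be controlled. This is exactly what lets the potentially large tails of $S'$ be swept into $J_p$ and $J_q$, with $\tilde a$ and $\tilde b$ serving as the cut points separating the controlled centre from the uncontrolled tails.

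First I would transport the problem to $\R$ along $f$. Since $S'$ is compact with $f([-3,3]) \subseteq S' \subseteq f(\langle -4, 4\rangle)$, the set $T \defj f^{-1}(S')$ is a compact subset of $\langle -4, 4\rangle$ containing $[-3,3]$, and by~\eqref{eq:a-b-ograde} the parameters $t_{\tilde a}, t_{\tilde b}$ lie in $[-2.5, 2.5] \subseteq T$. Fix a target diameter $\delta>0$. Using the uniform continuity of $f$ on $[-3,3]$, I would choose closed intervals $C_0, \dots, C_{n+1} \subseteq \R$ of the following form: $C_0 = \langle -\infty, \alpha]$ and $C_{n+1} = [\beta, +\infty\rangle$ are the two tails, with $t_{\tilde a} < \alpha$ and $\beta < t_{\tilde b}$; the middle intervals $C_1, \dots, C_n$ cover $[t_{\tilde a}, t_{\tilde b}]$, each short enough that $C_i \subseteq [-3,3]$ and $\diam f(C_i) < \delta$; moreover $t_{\tilde a} \in C_0 \cap C_1$, $t_{\tilde b} \in C_n \cap C_{n+1}$, and $C_i \cap C_j = \emptyset$ whenever $\abs{i-j}>1$. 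Setting $K_i \defj f(T \cap C_i)$ then produces a compact quasi-chain $(K_0, \dots, K_{n+1})$ covering $S' = f(T)$: each $K_i$ is compact, the quasi-chain property transfers from the intervals because $f$ is injective (so $K_i \cap K_j = f(T \cap C_i \cap C_j)$), and $\tilde a \in K_0 \cap K_1$, $\tilde b \in K_n \cap K_{n+1}$, while $\diam K_i < \delta$ for $1 \le i \le n$.

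Next I would apply Lemma~\ref{lm:napuhavanje} to $(K_0, \dots, K_{n+1})$ with $A = \emptyset$ and a precision $r>0$ to be fixed, obtaining $p, l, q$ with $\kraj l = n-1$ such that $K_0 \subseteq_r J_p$, $K_{n+1} \subseteq_r J_q$, $K_i \subseteq_r J_{(l)_{i-1}}$ for $1 \le i \le n$, and $(J_p, J_{(l)_0}, \dots, J_{\zadnji l}, J_q)$ is a formal chain. As $\subseteq_r$ entails ordinary containment, the covering property gives $S' \subseteq J_p \cup J_\ugll \cup J_q$, which is \ref{it:o1}; the formal-chain conclusion is \ref{it:o2}; and $\tilde a \in K_0 \subseteq J_p$, $\tilde b \in K_{n+1} \subseteq J_q$ give \ref{it:o3} and \ref{it:o4}. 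Hence $l \in \Omega$. For the quantitative bounds, $\tilde a \in K_1 \subseteq J_{(l)_0}$ and $\tilde b \in K_n \subseteq J_{\zadnji l}$ force $d(\tilde a, J_{(l)_0}) = d(\tilde b, J_{\zadnji l}) = 0 < \frac\eps2$, while Lemma~\ref{lm:r-diam} applied to $K_i \subseteq_r J_{(l)_{i-1}}$ yields $\fdiam((l)_{i-1}) < 4r + \diam K_i < 4r + \delta$, so that $\fmesh(l) = \max_{i} \fdiam((l)_i) < 4r + \delta$; choosing $\delta$ and $r$ at the outset with $4r + \delta < \frac\eps2$ completes the verification.

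The step I expect to be the main obstacle is the interval bookkeeping underlying the quasi-chain: arranging the tails $C_0$ and $C_{n+1}$ to meet only their neighbouring middle intervals (so that non-consecutive links are disjoint and $(K_0,\dots,K_{n+1})$ really is a quasi-chain) while simultaneously ensuring that $C_0 \cup \dots \cup C_{n+1}$ covers $[\inf T, \sup T]$ and that $\tilde a, \tilde b$ land in the prescribed links. Once this configuration is in place, the rest is a direct application of Lemmas~\ref{lm:napuhavanje} and~\ref{lm:r-diam}, driven by the design choice of letting $J_p$ and $J_q$ carry the uncontrolled tails of $S'$.
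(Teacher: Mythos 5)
Your proposal is correct and takes essentially the same route as the paper: both proofs assemble a compact quasi-chain along $f$ whose two end links absorb the tails of $S'$ and whose middle links have controlled diameter, then inflate it to a formal chain via Lemma~\ref{lm:napuhavanje} (with $A=\emptyset$) and bound $\fmesh(l)$ via Lemma~\ref{lm:r-diam} with $4r+\delta<\frac\eps2$ (the paper takes $r=\frac\eps{16}$, $\delta=\frac\eps4$). The only difference is bookkeeping: by cutting exactly at $t_{\tilde a}$ and $t_{\tilde b}$, so that $\tilde a\in K_0\cap K_1\subseteq J_p\cap J_{(l)_0}$ and likewise for $\tilde b$, you make the two distance conditions hold trivially (with distance $0$) and avoid the paper's counting inequality~\eqref{eq:ij-ts}, whereas the paper subdivides $[-4,4]$ uniformly, groups $C_0\cup\dots\cup C_i$ and $C_j\cup\dots\cup C_n$, and bounds $d\bigl(\tilde a, J_{(l)_0}\bigr)$ by $\diam C_i$ through the shared point $f(x_{i+1})$ — machinery it then reuses in the proof of Claim~\ref{cl:3}.
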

    \begin{proof}[Proof of Claim~\ref{cl:1}]\renewcommand{\qedsymbol}{}
        Since $f$ is uniformly continuous on $[-4, 4]$, we can find a subdivision
        $ -4 = x_0 < x_1 < \dots < x_{n+1}=4$
        of $[-4, 4]$ such that for each $i \in \{0, \dots, n\}$, 
        \begin{equation}\label{eq:subd-diam}
            \diam f([x_i, x_{i+1}]) < \frac\eps4 \text.
        \end{equation}
        Denote $C_i \defj f([x_i, x_{i+1}])$. 

        First, note that if $f(s) \in C_i$ and $f(t) \in C_j$, then 
        \begin{equation}\label{eq:ij-ts}
          \abs{j-i} > 2\abs{t-s} - 1 \text. 
        \end{equation}
        Indeed, suppose without loss of generality $s < t$ (the case $s=t$ is trivial); then $i \le j$ and
        \[t - s = ( t - x_j) + (x_j - x_{j-1}) + \dots + (x_{i+1} - s)\text,\]
        where we have $j-i+1$ summands. By~\eqref{eq:subd-diam} and~\eqref{eq:def-eps}, each one is less than $\frac12$, so $\abs{t-s} < \frac12(\abs{j-i}+1)$, and~\eqref{eq:ij-ts} follows from this. 
        
        Let $i$ be the index such that $t_{\tilde a} \in [x_i, x_{i+1} \rangle$ and let $j$ be the index such that $t_{\tilde b} \in \langle x_j, x_{j+1}]$. Then $\tilde a \in C_i$ and $\tilde b \in C_j$ and, by~\eqref{eq:a-b-ograde}, we have
        \begin{gather*}
        i = \abs{i-0} > 2\abs{t_{\tilde a}-(-4)} -1 \ge 2\abs{-2.5 + 4} - 1 = 2 \text,\\
        n - j = \abs{n-j} > 2\abs{4-t_{\tilde b}} -1 \ge 2 \abs{4-2.5} - 1 = 2  
        \shortintertext{and}
        \abs{j-i} > 2 \abs{t_{\tilde b}-t_{\tilde a}} - 1 \ge 2 \abs{1.5-(-1.5)} -1 = 5 \text.
        \end{gather*}

        Now we consider the compact quasi-chain
        \[ (C_0 \cup \dots \cup C_i, C_{i+1}, \dots, C_{j-1}, C_j \cup \dots \cup C_n) \text.\]
        By Lemma~\ref{lm:napuhavanje} (for $A = \emptyset$), we can find $p, l, q \in \N$ such that
        \begin{gather}
        C_0 \cup \dots \cup C_i \subseteq_{\frac\eps{16}} J_p,\quad
        C_j \cup \dots \cup C_n \subseteq_{\frac\eps{16}} J_q, \\
        C_{i+1} \subseteq_{\frac\eps{16}} J_{(l)_0},\quad\dots,\quad C_{j-1} \subseteq_{\frac\eps{16}} J_{\zadnji l}\label{eq:cl1-sadrz} \text.
        \end{gather}
        and such that $\bigl(J_p, J_{(l)_0}, \dots, J_{\zadnji l}, J_q\bigr)$ is a formal chain.

        Now, we have $\tilde a \in C_i \subseteq J_p$, $\tilde b \in C_j \subseteq J_q$ and
        \[
        S' \subseteq f([-4, 4]) = \bigcup_{i=0}^n C_i \subseteq J_p \cup J_\ugll \cup J_q \text.
        \]
        Therefore, $(p, l, q)$ satisfies~\ref{it:o1}--\ref{it:o4}, so $l \in \Omega$. Moreover, since $\diam C_k < \frac\eps4$ for all $i < k < j$,~\eqref{eq:cl1-sadrz} and Lemma~\ref{lm:r-diam} imply that for all $u\in\ugll$,
        \[
        \fdiam (u) < 4\cdot \frac\eps{16} + \frac\eps4 = \frac\eps2 \text,
        \]
         so $\fmesh(l) < \frac\eps2$. Finally, since $f(x_{i+1}) \in C_i \cap C_{i+1} \subseteq C_i \cap J_{(l)_0}$, we have
        \[d\bigl(\tilde a, J_{(l)_0}\bigr) \le d\bigl(\tilde a, f(x_{i+1})\bigr) \le \diam C_i < \frac\eps2 \]
        and, similarly, $d\bigl(\tilde b, J_{\zadnji l}\bigr) < \frac\eps2$. This concludes the proof of Claim~\ref{cl:1}.
    \end{proof}

    \begin{claim}\label{cl:2}
        If $l \in \Omega$, then 
        \[\left( J_{(l)_0} \cap S', \dots, J_{\zadnji l} \cap S' \right)\]
        is an open chain in $S'$.
    \end{claim}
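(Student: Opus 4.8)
The plan is to derive the chain property from the connectedness of a single arc joining $\tilde a$ to $\tilde b$ inside $S'$. Two of the three requirements are immediate. Each $J_{(l)_k}$ is a finite union of open balls, hence open in \Xd, so every link $J_{(l)_k}\cap S'$ is open in $S'$. For the ``distant links are disjoint'' half of the chain condition I would invoke~\ref{it:o2}: since $(J_p,J_{(l)_0},\dots,J_{\zadnji l},J_q)$ is a formal chain, whenever $\abs{u-v}>1$ we have $J_{(l)_u}\Jd J_{(l)_v}$, which gives $\hat J_{(l)_u}\cap\hat J_{(l)_v}=\emptyset$ and in particular $(J_{(l)_u}\cap S')\cap(J_{(l)_v}\cap S')=\emptyset$. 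So the real work is to show that each link meets $S'$ and that consecutive links meet \emph{inside} $S'$.

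For this I would introduce the set $T\defj f([t_{\tilde a},t_{\tilde b}])$. By~\eqref{eq:a-b-ograde} we have $[t_{\tilde a},t_{\tilde b}]\subseteq[-2.5,2.5]\subseteq[-3,3]$, so $T\subseteq f([-3,3])\subseteq S'$ by~\eqref{eq:def-S'}; moreover $T$ is an arc (a continuous image of an interval, hence connected) containing both $\tilde a=f(t_{\tilde a})$ and $\tilde b=f(t_{\tilde b})$. By~\ref{it:o1} the quasi-chain $(J_p,J_{(l)_0},\dots,J_{\zadnji l},J_q)$ covers $S'$, hence covers $T$, and by~\ref{it:o3} and~\ref{it:o4} the connected set $T$ meets the first link $J_p$ (at $\tilde a$) and the last link $J_q$ (at $\tilde b$).

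The crux is then a standard disconnection argument, which I would carry out for $T$ and the covering quasi-chain $(U_0,\dots,U_M)\defj(J_p,J_{(l)_0},\dots,J_{\zadnji l},J_q)$. Suppose some consecutive pair satisfied $U_r\cap U_{r+1}\cap T=\emptyset$. Set $A\defj(U_0\cup\dots\cup U_r)\cap T$ and $B\defj(U_{r+1}\cup\dots\cup U_M)\cap T$. Both are open in $T$ and together cover $T$; they are disjoint because any common point would lie in some $U_s\cap U_t$ with $s\le r<r+1\le t$, which is impossible when $s=r,\ t=r+1$ by the assumed break and otherwise (forcing $\abs{s-t}>1$) by the quasi-chain property $U_s\cap U_t=\emptyset$. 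Since $A\supseteq T\cap U_0\ni\tilde a$ and $B\supseteq T\cap U_M\ni\tilde b$ are nonempty, this would disconnect $T$, a contradiction. Hence every consecutive pair of links shares a point of $T$, and in particular every link meets $T$. Because $T\subseteq S'$, this yields $J_{(l)_k}\cap S'\supseteq J_{(l)_k}\cap T\ne\emptyset$ for each $k$ (nonemptiness) and $(J_{(l)_k}\cap S')\cap(J_{(l)_{k+1}}\cap S')\supseteq J_{(l)_k}\cap J_{(l)_{k+1}}\cap T\ne\emptyset$ for consecutive $k$. Combined with the distant-disjointness noted above, this shows $(J_{(l)_0}\cap S',\dots,J_{\zadnji l}\cap S')$ is an open chain in $S'$.

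I expect the main obstacle to be precisely the insistence that consecutive links overlap \emph{within} $S'$ rather than merely as subsets of $X$; this is what forces the use of the connected arc $T$ lying inside $S'$ and meeting the extreme links $J_p$ and $J_q$, and it is the reason the points $\tilde a,\tilde b$ were pinned down by the parameter bounds~\eqref{eq:a-b-ograde}. Everything else — openness of the links and disjointness of the far ones — is a direct reading of the definitions.
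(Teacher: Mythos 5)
Your proposal is correct and follows essentially the same route as the paper: both use the connected set $f([t_{\tilde a},t_{\tilde b}])\subseteq f([-3,3])\subseteq S'$, covered via (O1) with endpoints pinned in the extreme links $J_p$ and $J_q$ by (O3) and (O4), and derive the intersection of consecutive links from a disconnection argument, with the quasi-chain property coming from the formal chain condition (O2). Your write-up merely makes explicit the cross-disjointness of the two halves of the cover, which the paper asserts more briefly.
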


    \begin{proof}[Proof of Claim~\ref{cl:2}]\renewcommand{\qedsymbol}{}
        Let $p, q \in \N$ be such that~\ref{it:o1}--\ref{it:o4} hold for $(p, l, q)$. We will prove that 
        \[\mathcal C \defj \bigl( J_p \cap S', J_{(l)_0} \cap S', \dots, J_{\zadnji l} \cap S', J_q \cap S' \bigr)\]
        is an open chain in $S'$. 

        The links of $\mathcal C$ are clearly open in $S'$. By~\ref{it:o2}, $\mathcal C$ is a quasi-chain. It is sufficient to prove that neighbouring links of $\mathcal C$ intersect.

       By~\eqref{eq:a-b-ograde},~\eqref{eq:def-S'} and~\ref{it:o1}, we have
       \begin{equation}
    f([t_{\tilde a}, t_{\tilde b}]) \subseteq f([-3, 3]) \subseteq S' = \bigcup \mathcal C
    \text.
    \end{equation}
       If $J_{(l)_{i-1}}\!\cap S'$ and $J_{(l)_i}\!\cap S'$ were disjoint for some $i>0$, we would have
       \begin{align*}
       f([t_{\tilde a}, t_{\tilde b}]) &\subseteq \left((J_p \cap S') \cup \dots \cup (J_{(l)_{i-1}}\!\cap S') \right) \cup{}\\
       &{}\cup \left( (J_{(l)_i}\!\cap S') \cup \dots \cup (J_q \cap S') \right) \text.
       \end{align*}
       However, this is impossible: because $f(t_{\tilde a}) \in J_p \cap S'$ and $f(t_{\tilde b}) \in J_q \cap S'$ by~\ref{it:o3} and~\ref{it:o4}, the right-hand side is a disjoint union of nonempty open sets, but $f([t_{\tilde a}, t_{\tilde b}])$ is connected. The same argument shows that $J_p \cap S'$ intersects $J_{(l)_0} \cap S'$ and $J_{\zadnji l} \cap S'$ intersects $J_q \cap S'$.  
       This concludes the proof of Claim~\ref{cl:2}.
    \end{proof}

    For $(l, l') \in \N^2$, we consider the following statements:
    \begin{enumerate}[label=(G\arabic*)]
        \item\label{it:g1} $l, l' \in \Omega$;
        \item\label{it:g2} $J_\ugllc\sA J_\ugll$;
        \item\label{it:g3} $\fmesh(l') < \frac12\fmesh(l)$;
        \item\label{it:g4} $J_{(l')_0} \sA J_{(l)_0}$;
        \item\label{it:g5} $J_{\zadnji{l'}} \sA J_{\zadnji l}$.
    \end{enumerate}

    Denote by $\Gamma$ the set of all $(l, l') \in \N^2$ which satisfy~\ref{it:g1}--\ref{it:g5}. By computable enumerability of $\Omega$ and $\sA$ (as a relation on $J$-indices) and computability of $\fmesh$, $\Gamma$ is c.e.

    \begin{claim}\label{cl:3}
        For each $l \in \Omega$, there exists $l' \in \Omega$ such that $(l, l') \in \Gamma$.
    \end{claim}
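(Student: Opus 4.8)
The plan is to realise this refinement step geometrically: subdivide the arc $f([-4,4])$ more finely than at stage $l$ and then pass to formal balls via Lemma~\ref{lm:napuhavanje}, exploiting its $A$-parameter to force the formal inclusions~\ref{it:g2},~\ref{it:g4},~\ref{it:g5}. Fix $l\in\Omega$ together with witnesses $p,q$ for which~\ref{it:o1}--\ref{it:o4} hold, and set $A\defj\ugll$. By Claim~\ref{cl:2} the sequence $\mathcal C=\bigl(J_p\cap S',J_{(l)_0}\cap S',\dots,J_{\zadnji l}\cap S',J_q\cap S'\bigr)$ is a genuine open chain in $S'$, and since $[t_{\tilde a},t_{\tilde b}]\subseteq[-3,3]$ by~\eqref{eq:a-b-ograde} we have $f([t_{\tilde a},t_{\tilde b}])\subseteq f([-3,3])\subseteq S'$; thus $\mathcal C$ restricts to an open chain covering this arc, with $\tilde a$ in its first and $\tilde b$ in its last link.

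First I would choose the subdivision. Using uniform continuity of $f$ on $[-4,4]$ I would fix $-4=y_0<\dots<y_{M+1}=4$ so fine that, writing $K'_k\defj f([y_k,y_{k+1}])$, each $\diam K'_k$ lies below a threshold $\eta$ (fixed at the end), and — this is where the chain structure of $\mathcal C$ is used — so fine that every piece $K'_k$ with $[y_k,y_{k+1}]\subseteq[t_{\tilde a},t_{\tilde b}]$ lies inside a single coarse link among $J_p,J_{(l)_0},\dots,J_{\zadnji l},J_q$; its existence is a Lebesgue-number argument for the open cover $\bigl\{f^{-1}(J_p),f^{-1}(J_{(l)_0}),\dots,f^{-1}(J_q)\bigr\}$ of the compact interval $[t_{\tilde a},t_{\tilde b}]$. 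Since $f$ is injective, $(K'_0,\dots,K'_M)$ is automatically a quasi-chain. I would then merge all pieces up to and including the one containing $t_{\tilde a}$ into a compact set $K_0$, merge all pieces from the one containing $t_{\tilde b}$ onward into $K_{N+1}$, and keep the intermediate pieces as $K_1,\dots,K_N$, obtaining a compact quasi-chain $(K_0,\dots,K_{N+1})$ covering $f([-4,4])\supseteq S'$ with $\tilde a\in K_0$ and $\tilde b\in K_{N+1}$.

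Next I would inflate. Applying Lemma~\ref{lm:napuhavanje} to $(K_0,\dots,K_{N+1})$, the set $A$, and a small radius $\eps'$ produces $p',l',q'$ such that $\bigl(J_{p'},J_{(l')_0},\dots,J_{\zadnji{l'}},J_{q'}\bigr)$ is a formal chain with $K_0\subseteq J_{p'}$, $K_{N+1}\subseteq J_{q'}$ and $K_k\subseteq J_{(l')_{k-1}}$, and such that whenever $K_k\subseteq J_a$ for some $a\in A$ the inflated index satisfies $J_{(l')_{k-1}}\sA J_a$. This gives $l'\in\Omega$:~\ref{it:o1} holds because the chain covers $S'$,~\ref{it:o2} is the formal-chain conclusion, and~\ref{it:o3}--\ref{it:o4} hold because $\tilde a\in K_0\subseteq J_{p'}$ and $\tilde b\in K_{N+1}\subseteq J_{q'}$; so~\ref{it:g1} holds as well. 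Each middle $K_k$ lies in some middle coarse link $J_{(l)_m}$, so Lemma~\ref{lm:napuhavanje}\ref{it:napuh-c} yields $J_{(l')_{k-1}}\sA J_{(l)_m}$ and hence $J_\ugllc\sA J_\ugll$, which is~\ref{it:g2}; conditions~\ref{it:g4} and~\ref{it:g5} arise by the same mechanism once the extreme middle pieces are placed inside $J_{(l)_0}$ and $J_{\zadnji l}$. Finally,~\ref{it:g3} follows from Lemma~\ref{lm:r-diam}, since $\fdiam\bigl((l')_{k-1}\bigr)<4\eps'+\diam K_k<4\eps'+\eta$; as $\fmesh(l)>0$ (every formal diameter dominates a positive radius), fixing $\eta,\eps'$ with $4\eps'+\eta<\tfrac12\fmesh(l)$ gives $\fmesh(l')<\tfrac12\fmesh(l)$.

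The main obstacle is the endpoint alignment required for~\ref{it:g4} and~\ref{it:g5}: I must ensure the \emph{first} middle piece lands inside $J_{(l)_0}$ (not merely inside $J_p$) and the \emph{last} inside $J_{\zadnji l}$. Because $\mathcal C$ is a genuine chain, $J_p\cap S'$ and $J_{(l)_1}\cap S'$ are disjoint, so on $[t_{\tilde a},t_{\tilde b}]$ (where $f$ takes values in $S'$) the preimages $f^{-1}(J_p)$ and $f^{-1}(J_{(l)_1})$ are disjoint; hence, travelling along the arc from $\tilde a$, the cover must pass through $J_{(l)_0}$ before reaching $J_{(l)_1}$. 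This monotonicity of the refinement is exactly Lemma~\ref{lm:links}, and together with the disjointness of non-adjacent links it lets me locate a cut point $\tau_a>t_{\tilde a}$ with $f([\tau_a,\tau_a+h])\subseteq J_{(l)_0}$ for small $h$; I would insert $\tau_a$ as a subdivision point so that it becomes the right endpoint of $K_0$'s domain, making $K_1\subseteq J_{(l)_0}$, and argue symmetrically at the $\tilde b$ end. Checking that this cut can be taken at a subdivision point while keeping $\tilde a\in K_0$ is the delicate part; once it is settled, the remaining verifications are routine.
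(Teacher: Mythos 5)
Your overall architecture is the same as the paper's (fine subdivision of $[-4,4]$ via uniform continuity and a Lebesgue number, merging of the end pieces, Lemma~\ref{lm:napuhavanje} with $A=\ugll$, and \ref{it:g3} from Lemma~\ref{lm:r-diam} with a threshold $4\eps'+\eta<\frac12\fmesh(l)$, matching the paper's $r\le\frac1{10}\fmesh(l)$), but there is a genuine gap at the step you assert without proof: \enquote{each middle $K_k$ lies in some middle coarse link $J_{(l)_m}$.} The Lebesgue argument only places each middle piece inside \emph{some} member of the cover, and that member may well be $J_p$ or $J_q$: the links are finite unions of rational balls (intersected with $S'$), so $J_p\cap S'$ can recur along the arc --- the arc can leave $J_p$, pass through $J_{(l)_0}$, and re-enter $J_p$ farther along, since the chain condition constrains only which links intersect, not connectivity along the arc. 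For such a piece $K_k\subseteq J_p$ contained in no $J_{(l)_m}$, clause~\ref{it:napuh-c} of Lemma~\ref{lm:napuhavanje} with $A=\ugll$ forces no formal containment on $J_{(l')_{k-1}}$, so \ref{it:g2} is not established --- indeed $J_{(l')_{k-1}}$ need not even be a subset of $J_\ugll$. Your proposed fix (first-passage cut points $\tau_a$, $\tau_b$, located by Lemma~\ref{lm:links}) pins down only the \emph{first} and \emph{last} middle pieces; it does nothing to exclude recurrences into $J_p$ or $J_q$ between the cuts, so the gap remains.

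The paper closes exactly this gap with extremal choices rather than first-passage ones: among the fine pieces $C_t$ lying between the ones containing $\tilde a$ and $\tilde b$, it takes $v$ \emph{maximal} with $C_v\subseteq J_{(l)_0}$, then $w\ge v$ \emph{minimal} with $C_w\subseteq J_{\zadnji l}$, and shows that no intermediate piece can lie in $J_p$ or $J_q$: if $C_k\subseteq J_p$ with $v<k<w$, then Lemma~\ref{lm:links} (applied to $C_w\subseteq J_{\zadnji l}$ and the fact that $J_p$ strictly precedes $J_{(l)_0}$, which strictly precedes $J_{\zadnji l}$) produces $u$ with $k<u<w$ and $C_u\subseteq J_{(l)_0}$, contradicting the maximality of $v$; symmetrically for $J_q$ using $w$. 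Hence $(C_v,\dots,C_w)$ strongly refines $\bigl(J_{(l)_0},\dots,J_{\zadnji l}\bigr)$, and only then are the ends merged as $C_0\cup\dots\cup C_{v-1}$ and $C_{w+1}\cup\dots\cup C_n$ before inflating. The rest of your proposal --- the verification of \ref{it:o1}--\ref{it:o4} for $(p',l',q')$, hence \ref{it:g1}; \ref{it:g4} and \ref{it:g5} via clause~\ref{it:napuh-c}; the quasi-chain structure from injectivity of $f$ --- is essentially correct and parallels the paper; replacing your first-passage cuts with this max/min selection and the accompanying Lemma~\ref{lm:links} contradiction argument would complete the proof.
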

    \begin{proof}[Proof of Claim~\ref{cl:3}]\renewcommand{\qedsymbol}{}
        Let $l \in \Omega$ and let $p, q \in \N$ be such that~\ref{it:o1}--\ref{it:o4} hold for $(p, l, q)$.

        By~\ref{it:o1}, $\{J_p, J_{(l)_0}, \dots, J_{\zadnji l}, J_q\}$ is an open cover of $S'$. Let $\lambda$ be a Lebesgue number of this cover.

        Since $J_p$ and $J_q$ are open and $\tilde a \in J_p$, $\tilde b \in J_q$, there exist $r_1,r_2 > 0$ such that $B(\tilde a, r_1) \subseteq J_p$ and $B(\tilde b, r_2) \subseteq J_q$. Denote
        \[ r \defj \min\, \bigl\{\lambda, r_1,r_2, \textstyle \frac{1}{10}\fmesh(l) \bigr\} \text.\]

        Let $ -4 = x_0 < x_1 < \dots < x_{n+1}=4$ be a subdivision
        of $[-4, 4]$ such that for each $i \in \{0, \dots, n\}$
        \begin{equation}\label{eq:subd-diam-2}
        \diam f([x_i, x_{i+1}]) < r \text.
        \end{equation}
        Let $C_i \defj f([x_i, x_{i+1}]) \cap S',\forall i\in\{0,\dots,n\}$. Then $(C_0, \dots, C_n)$ is a compact quasi-chain in~$S'$ which covers~$S'$. 

        As in the proof of Claim~\ref{cl:1}, let $i$ be the index such that $t_{\tilde a} \in [x_i, x_{i+1} \rangle$ and let $j$ be the index such that $t_{\tilde b} \in \langle x_j, x_{j+1}]$, so that $\tilde a \in C_i$ and $\tilde b \in C_j$. As before, we know that $i > 2$, $j < n-2$ and $\abs{i-j} > 5$. 

        Since $\tilde a \in C_i$, $\diam C_i < r_1$ and $B(\tilde a, r_1) \subseteq J_p$ we have $C_i \subseteq J_p$. Similarly, $C_j \subseteq J_q$. By Lemma~\ref{lm:links}, there exists $t$ such that $i < t < j$ and $C_t \subseteq J_{(l)_0}$.
        Let 
        \[ v \defj \max\,\bigl\{t \in \{i+1, \dots, j-1\} \bigm| C_t \subseteq J_{(l)_0} \bigr\} \text. \]

        Again, by Lemma~\ref{lm:links}, there exists $s$ such that $v \le s < j$ and $C_s \subseteq J_{\zadnji l}$. Let
        \[ w \defj \min\,\bigl\{s \in \{v, \dots, j-1\} \bigm| C_s \subseteq J_{\zadnji l} \bigr\} \text. \]

        We claim that $(C_v, \dots C_w)$ strongly refines $\left(J_{(l)_0}, \dots, J_{\zadnji l} \right)$. By definition, $C_v \subseteq J_{(l)_0}$ and $C_w \subseteq J_{\zadnji l}$. Suppose $k \in \{v+1, \dots, w-1\}$. Then, since $\diam C_k < \lambda$, $C_k$ must be a subset of $J_p$, $J_q$ or $J_i$ for some $i \in\ugll$. Suppose $C_k \subseteq J_p$. Since $C_w \subseteq J_{\zadnji l}$, $J_p$ strictly precedes $J_{(l)_0}$ and $J_{(l)_0}$ strictly precedes $J_{\zadnji l}$, by Lemma~\ref{lm:links} there exists $u$ such that $k < u < w$ and $C_u \subseteq J_{(l)_0}$. However, this contradicts our choice of $v$ because $v < k < u$. A similar argument shows that $C_k \not\subseteq J_q$, so $C_k$ is a subset of $J_i$ for some $i\in\ugll$.

        To summarize, 
        \[
        (C_0 \cup \dots \cup C_{v-1}, C_v, \dots, C_w, C_{w+1} \cup \dots \cup C_n)
        \]
        is a compact quasi-chain in $S'$ which covers $S'$ and its subchain $(C_v, \dots C_w)$ strongly refines $\bigl(J_{(l)_0}, \dots, J_{\zadnji l} \bigr)$.
        By Lemma~\ref{lm:napuhavanje} (with $A=[l]$), we can find $p', l', q' \in \N$ such that
        \begin{gather}
        C_0 \cup \dots \cup C_{v-1} \subseteq_{r} J_{p'},\quad
        C_{w+1} \cup \dots \cup C_n \subseteq_{r} J_{q'} \text, \\
        C_{v} \subseteq_{r} J_{(l')_0},\quad\dots,\quad C_{w} \subseteq_{r} J_{\zadnji{l'}} \text,\label{eq:gamma-eps} \\
        J_{(l')_0}\sA J_{(l)_0}, \quad J_{\zadnji{l'}} \sA J_{\zadnji l} \qand J_\ugllc\sA J_\ugll\label{eq:gamma-ref}
        \end{gather}
        and such that $\bigl(J_{p'}, J_{(l')_0}, \dots, J_{\zadnji{l'}}, J_{q'}\bigr)$ is a formal chain.

        Moreover, we have $\tilde a \in C_i \subseteq J_{p'}$, $\tilde b \in C_j \subseteq J_{q'}$ and
        \[ S' = f([-4, 4]) \cap S' = \bigcup\nolimits_{i=1}^{\;n} C_i \subseteq J_{p'} \cup J_\ugllc\cup J_{q'} \text.\]
        Therefore, $(p', l', q')$ satisfies~\ref{it:o1}--\ref{it:o4}, which implies~\ref{it:g1}; while~\ref{it:g2},~\ref{it:g4} and~\ref{it:g5} follow from~\eqref{eq:gamma-ref}. Finally,~\eqref{eq:subd-diam-2},~\eqref{eq:gamma-eps} and Lemma~\ref{lm:r-diam} imply that 
        \[
        \fdiam (k) < 4r + r \le \frac12 \fmesh (l)\text{,\quad for each $k \in \ugll$,}
        \]
        so~\ref{it:g3} also holds. This proves $(l, l') \in \Gamma$, so Claim~\ref{cl:3} holds.
    \end{proof}

    Claim~\ref{cl:3} implies the existence of a partial recursive funcion $\psi : \Omega \to \N$ such that $\bigl(l, \psi(l)\bigr) \in \Gamma$ for all $l \in \Omega$. Let $l_0 \in \Omega$ be as in Claim~\ref{cl:1} and define the sequence $(l_n)_{n \in \N}$ of natural numbers with
    \[
        l_{n+1} \defj \psi(l_n), \qquad \forall n \in \N \text.
    \]
    Obviously, $(l_n)_{n \in \N}$ is recursive and $(l_n, l_{n+1}) \in \Gamma$ for all $n \in \N$.

    Consider the sequence $(\mathcal C_n)_{n\in\N}$, where
    \begin{equation}
        \mathcal C_n \defj \Bigl(J_{(l_n)_0} \cap S', \dots, J_{\zadnji{l_n}} \cap S'\Bigr), \qquad \forall n \in \N \text.
    \end{equation}
    By Claim~\ref{cl:2}, each $\mathcal C_n$ is an open chain in $S'$. As a subspace of $(X, d)$, $S'$ is a metric space in which every closed ball is compact (since $S'$ itself is compact). Also,~\ref{it:g2} and Proposition~\ref{prop:profinjenje}\ref{it:prof1} imply that 
    \begin{equation} \label{eq:prof}
        \Bigl( \cl\bigl(J_{(l_{n+1})_0} \cap S'\bigr), \dots, \cl\bigl(J_\zadnji{l_{n+1}} \cap S'\bigr) \Bigr) \ \text{strongly refines}\ \mathcal C_n
    \end{equation}
    for each $n \in \N$. It follows from Claim~\ref{cl:1} and~\ref{it:g3} that 
    \begin{equation} \label{mash-brzo-pada}
        \mesh (\mathcal C_n) < 2^{-n}, \qquad \forall n \in \N\text.
    \end{equation}

    By Lemma~\ref{lm:presjek-lanaca},
    \begin{equation} \label{def-N'}
        N' \defj \bigcap\nolimits_{n \in \N}\bigl(\,\bigcup\nolimits_{j\in[\mspace{2mu}l_{n+1}\mspace{1mu}]} \cl(J_j \cap S')\bigr)
    \end{equation}
    is a continuum  in $S'$ chainable from $a$ to $b$, where 
    \begin{equation} \label{def-a-b}
        a \in \bigcap\nolimits_{n \in \N} \bigl(J_{(l_n)_0} \cap S'\bigr) \qand b \in \bigcap\nolimits_{n \in \N} \bigl(J_{\zadnji{l_n}} \cap S'\bigr) \text.
    \end{equation}

    Since 
    \[
    N' \subseteq S' \subseteq f(\langle -4, 4 \rangle) \subseteq N \text,
    \]
    $N'$ is a subset of $N$ and a compact connected subset of $f(\langle -4, 4 \rangle)$, so it must be an arc. Since an arc can only be chainable from one of its endpoints to the other, $a$ and $b$ must be endpoints of $N'$.
    
   Since $\fmesh(l_n) < 2^{-n}$ for each $n \in \N$, \eqref{def-a-b} implies
   \[d\bigl(a, \lambda_{((l_n)_0)_0}\bigr) < 2^{-n} \qand d\bigl(b, \lambda_{(\zadnji{l_n})_0}\bigr) < 2^{-n}\]
   for each $n \in \N$. It follows that $a$ and $b$ are computable points.

   The function
   $ n \mapsto \left\lbrace \lambda_{(i)_0} \bigm| i \in \uglln \right\rbrace$ is computable. Let $n\in\N$. We claim that 
   \begin{equation} \label{efektivna-aproksimacija}
    N' \approx_{2^{-n}}\! \left\lbrace \lambda_{(i)_0} \bigm| i \in \uglln \right\rbrace\text.
   \end{equation}

   Let $y \in N'$. By~\eqref{def-N'} and~\eqref{eq:prof}, $y \in J_i \cap S'$ for some $i \in \uglln$. Since $\fmesh(l_n) < 2^{-n}$ and therefore $\fdiam(i) < 2^{-n}$, it holds that
   \[ \diam (J_i \cap S') \le \diam (J_i) \le \fdiam(i) < 2^{-n} \text.\]
    Therefore, $d\bigl(y, \lambda_{(i)_0}\bigr) < 2^{-n}$.

    Let $i \in \uglln$. Since $\mathcal C_n$ is a chain which covers $N'$, its first link contains $a \in N'$ and its last link contains $b \in N'$, the connectedness of $N'$ implies that its link $J_i \cap S'$ must intersect $N'$. Let $y \in N' \cap (J_i \cap S')$. Again, using $\fdiam(i) < 2^{-n}$ we conclude that
    $d\bigl(y, \lambda_{(i)_0}\bigr) < 2^{-n}$. 
    
    It is not hard to conclude~\citeup{Proposition 2.6}{lmcs:mnf} that there exists a computable function $\phi : \N \to \N$ such that
    \[\left\lbrace \lambda_{(i)_0} \bigm| i \in \uglln \right\rbrace = \{\alpha_i \mid i \in [\phi(k)] \} \text,\]
    so~\eqref{efektivna-aproksimacija} implies that $N'$ is a computable set in $(X, d, \alpha)$. This concludes the proof of Theorem~\ref{tm:compneigh}.
\end{proof}

Theorem~\ref{tm:compneigh} states that, if a point in a semicomputable set has a neighbourhood homeomorphic to $\R$, then it has a computable neighbourhood which is an arc with computable endpoints. A natural potential strengthening of this statement would be to require the computable neighbourhood to be \textit{computably} homeomorphic to the unit segment $[0, 1]$. However, this stronger statement does not hold, as demonstrated by the following example.

\begin{exa}
    Let $B \subseteq \N$ be a non-computable c.e.\ set and let $(b_i)_{i \in \N}$ be an injective computable enumeration of $B$. For each $n \in \N$, let
    \[ T_n \defj \left[ \frac{1}{2^{n+1}}, \frac{1}{2^n} \right] \times \{0\} \subseteq \R^2 \]
    and for $n = b_i \in B$ let $S_n$ be the polygonal arc in $\R^2$ connecting
    \[ \left( \frac{1}{2^{n+1}}, 0\right)\text, 
    \left( \frac{5}{3}\cdot\frac{1}{2^{n+1}}, \frac{1}{2^{i+1}}\right)\text, 
    \left( \frac{4}{3}\cdot\frac{1}{2^{n+1}}, -\frac{1}{2^{i+1}}\right)\text{ and } 
    \left( \frac{1}{2^n}, 0 \right) \]
    (see Figure~\ref{fig:TnSn}.)

    \begin{figure}[ht]
        \centering
        \begin{tikzpicture}
            \draw[thick] (0, 0) -- (3, 0);
            \draw[thick, gray!50!white] (6, 0) -- (9, 0);
            \draw[thick] (6, 0) -- (8, 0.66) -- (7, -0.66) -- (9, 0);

            \fill (0, 0) circle (0.05);
            \fill (3, 0) circle (0.05);
            \fill (6, 0) circle (0.05);
            \fill (9, 0) circle (0.05);

            \node[anchor=north] at (0, 0) {$\frac{1}{2^{n+1}}$};
            \node[anchor=north] at (3, 0) {$\frac{1}{2^n}$};
            \node[anchor=north] at (6, 0) {$\frac{1}{2^{n+1}}$};
            \node[anchor=north] at (9, 0) {$\frac{1}{2^n}$};

            \node at (1.5, -1) {$T_n$};
            \node at (7.5, -1) {$S_n$};
        \end{tikzpicture}
        \caption{The sets $T_n$ and $S_n$ for $n = b_i \in B$}
        \label{fig:TnSn}
    \end{figure}

    Let
    \[ V_n \defj \begin{cases}
        T_n, \quad& n \notin B \\
        S_n, \quad& n \in B 
    \end{cases}\]
    and let
    \[ A := \bigcup_{n \in \N} V_n \cup [-1, 0] \times \{0\} \subseteq \R^2 \text.\]

    Then $A$ is a computable arc in $\R^2$ which is not an image of the unit segment $[0, 1]$ under any computable injection (see \cite[Example 5.1]{miller}). 

    The point $(0, 0)$ has a neighbourhood in $A$ homeomorphic to $\R$, so by Theorem~\ref{tm:compneigh} it has a neighbourhood which is a computable arc with computable endpoints (e.g.\ $A$ itself). However, there is no neighbourhood of $(0, 0)$ in $A$ \textit{computably} homeomorphic to $[0, 1]$. 

    Suppose $U$ is a neighbourhood of $(0, 0)$ in $A$ and $f : [0, 1] \to U$ is a computable homeomorphism. Without loss of generality, we may assume $f(0) = (s, 0)$ and $f(1)=(t, u)$, where $s < 0 < t$. Note that both $(s, 0)$ and $(t, u)$ are computable points. Moreover, note that the subarc $A'$ of $A$ between $(t, u)$ and $(1, 0)$ is a finite union of line segments with rational endpoints, so it has a computable parametrization $g : [0, 1] \to A'$ such that $g(0)=(t, u)$ and $g(1)=(1, 0)$. Now, the function $h : [0, 1] \to A$,
    \[ h(x) = \begin{cases}
        3x\cdot(s+1)-1, \quad & x \in [0, \frac{1}{3}] \\
        f(3x-1), & x \in [\frac{1}{3}, \frac{2}{3}] \\
        g(3x-2), & x \in [\frac{2}{3}, 1] 
    \end{cases} \]
    is a computable homeomorphism between $[0, 1]$ and $A$. Since no such homeomorphism (in fact, no such injection) exists, we conclude that no neighbourhood of $(0, 0)$ in $A$ is computably homeomorphic to $[0, 1]$.
\end{exa}

\begin{lem}\label{lm:izrezivanje}
    Let $S$ be a semicomputable set in a computable metric space \Xdalfa. Suppose a point $x \in S$ has an open neighbourhood $N$ in $S$ such that there exists a homeomorphism $f : [0, 1\rangle \to N$ such that $f(0)=x$. Let $\eps > 0$. Then there exists $a \in \langle 0, 1 \rangle$ such that $f(a)$ is a computable point, $S\setminus f([0, a\rangle)$ is a semicomputable set and $f([0, a]) \subseteq B(x, \eps)$.
\end{lem}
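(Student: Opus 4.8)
The plan is to reduce the statement to Lemma~\ref{lm:S-minus-J-semicomp} by cutting the ray exactly at the endpoint of a computable arc. First I would use the continuity of $f$ at $0$ to pick $t_0\in\langle 0,1\rangle$ with $f([0,t_0])\subseteq B(x,\eps)$. The point $f(t_0/2)$ has the open neighbourhood $f(\langle 0,t_0\rangle)$ in $S$, which is homeomorphic to $\R$, so Theorem~\ref{tm:compneigh} yields a computable arc neighbourhood $N'\subseteq f(\langle 0,t_0\rangle)$ of $f(t_0/2)$ with computable endpoints. Writing $N'=f([s_1,s_2])$ with $0<s_1<s_2<t_0$, both $f(s_1)$ and $f(s_2)$ are computable and $N'$ is a computable (in particular compact) set. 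I would then set $a\defj s_1$; this already gives $a\in\langle 0,1\rangle$, $f(a)=f(s_1)$ computable, and $f([0,a])\subseteq f([0,t_0])\subseteq B(x,\eps)$, so two of the three required properties hold and it remains only to prove that $S\setminus f([0,s_1\rangle)$ is semicomputable.

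The key idea is the set identity
\[ S\setminus f([0,s_1\rangle)=(S\setminus J_m)\cup N' \]
for a suitably chosen $m\in\N$. Since $f([0,s_2\rangle)$ is open in $S$, it equals $V\cap S$ for some open $V$ in \Xd, and $f([0,s_1])$ is a nonempty compact subset of $V$; by Proposition~\ref{prop:fdiam}\ref{it:fdiam2} there is $m$ with $f([0,s_1])\subseteq J_m$ and $\hat J_m\subseteq V$, hence $J_m\cap S\subseteq f([0,s_2\rangle)$. Granting these two properties of $J_m$ — namely $f([0,s_1])\subseteq J_m$ and $J_m\cap S\subseteq f([0,s_2\rangle)$ — the identity is a routine computation: every point of $J_m\cap S$ lies on $f([0,s_2\rangle)$, so removing $N'=f([s_1,s_2])$ from $J_m\cap S$ leaves exactly $f([0,s_1\rangle)$, while outside $J_m$ no point of $f([0,s_1\rangle)$ survives because $f([0,s_1\rangle)\subseteq J_m$.

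Finally I would show that the right-hand side is semicomputable. By Lemma~\ref{lm:S-minus-J-semicomp}, $S\setminus J_m$ is semicomputable, and $N'$ is computable, hence semicomputable; both sets meet every closed ball of \Xd\ in a compact set, so their union does as well. For the defining c.e.\ condition, observe that
\[ \hat I_i\cap\bigl((S\setminus J_m)\cup N'\bigr)\subseteq J_j \iff \hat I_i\cap(S\setminus J_m)\subseteq J_j \ \text{ and }\ \hat I_i\cap N'\subseteq J_j\text, \]
and each of the two relations on the right is c.e.\ in $(i,j)$ — the first by semicomputability of $S\setminus J_m$, the second because $N'$ is a semicomputable compact set, for which this relation is precisely the defining c.e.\ condition of (non\mbox{-}compact) semicomputability. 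A conjunction of two c.e.\ relations is c.e., so $(S\setminus J_m)\cup N'$ is semicomputable, which completes the argument.

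The step I expect to be the main obstacle is not any single estimate but the structural choice behind the identity. Because the cut point $f(s_1)$ is in general not a rational centre, \emph{no} finite union of rational balls can have trace exactly $f([0,s_1\rangle)$ on $S$: the half-open arc accumulates at $f(s_1)$ from inside, while $f(s_1)$ itself must be excluded, and a rational ball avoiding $f(s_1)$ necessarily leaves a gap around it. Cutting precisely at the endpoint of the computable arc $N'$ is what circumvents this — it lets the whole computable set $N'$ play the role of the piece that is \enquote{added back}, so that the otherwise problematic boundary behaviour at $f(s_1)$ is absorbed into the computable set $N'$ rather than into the rational-ball cover $J_m$.
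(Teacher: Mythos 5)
Your proof is correct and takes essentially the same approach as the paper's: both apply Theorem~\ref{tm:compneigh} inside the ray to obtain a computable arc $N'=f([s_1,s_2])$ with computable endpoints, cut at its near endpoint $a=s_1$, and establish the identity $S\setminus f([0,a\rangle)=(S\setminus J_m)\cup N'$ with $J_m$ squeezed between the compact set $f([0,a])$ and an open set of \Xd\ whose trace on $S$ is $f([0,s_2\rangle)$, concluding by Lemma~\ref{lm:S-minus-J-semicomp}. The only deviations are cosmetic: you centre the arc at $f(t_0/2)$ inside $f(\langle 0,t_0\rangle)$ rather than at $f(t)$ inside $f(\langle 0,1\rangle)$, invoke Proposition~\ref{prop:fdiam}\ref{it:fdiam2} where the paper argues directly from compactness, and spell out the semicomputability of the union of the two semicomputable pieces, which the paper only asserts.
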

\begin{proof}
    Since $f$ is continuous, there exists $t\in \langle 0,1\rangle$ such that
\begin{equation}\label{eq:0-t-u-kugli}
f([0,t])\subseteq B(x,\eps) \text.
\end{equation}
The set $f(\langle 0, 1 \rangle)$ is open in $N$ and therefore open in $S$. 
Therefore $f(\langle 0, 1 \rangle)$ is an open neighbourhood of $f(t)$ in $S$ which is homeomorphic to $\R$ and so Theorem~\ref{tm:compneigh} implies that there exists a computable neighbourhood $N'$ of $f(t)$ in $S$ contained in $f(\langle 0, 1 \rangle)$ which is an arc with computable endpoints. Since $f$ is a homeomorphism, $f^{-1}(N')$ is an arc in $\langle 0,1\rangle$, thus it is equal to $[a,b]$ for some $a,b\in \langle 0,1 \rangle $, $a < t < b$ (see Figure~\ref{fig:izrezivanje}).

\begin{figure}[ht]
    \centering
   \begin{tikzpicture}
\draw[orange!50!white, line width = 1.5mm, smooth, domain=0.6:2.75, variable=\t] plot ({\t}, {0.6*sin(\t r)});

\draw[thick, smooth, domain=0:6.282, variable=\t] plot ({\t}, {0.6*sin(\t r)});
\draw[thick, smooth, dashed, domain=6.282:7, variable=\t] plot ({\t}, {0.6*sin(\t r)});

\draw[dashed, gray] (0, 2.25) arc (90:-30:2.25);
\draw[->, gray] (0, 0) -- (2.2, -0.5);

\filldraw[fill=white] (0, 0) circle (0.065); 
\node[anchor=north] at (0, 0) {$x$};
\node[anchor=south east] at (0, 0) {$f(0)$};

\filldraw[fill=white] (1.5, {0.6*sin(1.5 r)}) node[label=$f(t)$] {} circle (0.065);
\filldraw[fill=white] (0.6, {0.6*sin(0.6 r)}) node[label=$f(a)$] {} circle (0.065);
\filldraw[fill=white] (2.75, {0.6*sin(2.75 r)}) node[label=$f(b)$] {} circle (0.065);

\node[anchor=north, gray] at (1, -0.25) {$\eps$};

\end{tikzpicture}
    \caption{Points along an arc as in the proof of Lemma~\ref{lm:izrezivanje}.\\ The highlighted arc is computable.}
    \label{fig:izrezivanje}
\end{figure}

Clearly, $f(a)$ is a computable point and 
\[f([0, a]) \subseteq f([0, t]) \subseteq B(x, \eps) \text.\] We claim that $S \setminus f([0, a \rangle)$ is a semicomputable set.

The set $f([0, b\rangle)$ is open in $N$ and therefore open in $S$. Let $U$ be an open set in \Xd\ such that $f([0, b\rangle) = S \cap U$. The set $U$ is open, so it is a union of rational open balls. Since $f([0, a])$ is a compact set contained in $U$, there is a finite union of rational open balls $J_m$ such that 
\begin{equation}\label{eq:Jm}
    f([0, a]) \subseteq J_m \subseteq U \text.
\end{equation}

By Lemma~\ref{lm:S-minus-J-semicomp}, $S \setminus J_m$ is semicomputable. Since $f([a, b])$ is computable, it is also semicomputable, so $(S \setminus J_m) \cup f([a, b])$ is semicomputable. 

Now $f(a)$ is a computable point and $f([0, a]) \subseteq B(x, \eps)$. To prove that $S \setminus f([0, a \rangle)$ is semicomputable, we show that it is equal to $(S \setminus J_m) \cup f([a, b])$.

Indeed, since $f$ is injective, $f([a, b])$ and $f[0, a\rangle)$ are disjoint, so $f([a, b]) \subseteq S \setminus f([0, a\rangle)$. It follows from~\eqref{eq:Jm} that $S \setminus J_m \subseteq S \setminus f([0, a\rangle)$. Therefore, $(S \setminus J_m) \cup f([a, b]) \subseteq S \setminus f([0, a \rangle)$.

Suppose $y \in S \setminus f([0, a\rangle)$. If $y \in J_m$, then by \eqref{eq:Jm} 
\[y \in U \cap S = f([0, b\rangle) = f([0, a\rangle) \cup f([a, b\rangle) \text,\]
which implies $y \in f([a, b\rangle)$. This proves $S \setminus f([0, a \rangle) \subseteq (S \setminus J_m) \cup f([a, b])$.
\end{proof}

\section{Computable approximations of semicomputable sets}

In this section, we consider \emph{(topological) graphs} \cite{mlq-graphs}, i.e.,\ disjoint unions of arcs and rays that can be glued together at their endpoints. We will show that every semicomputable graph can be approximated by a computable subgraph with computable endpoints. This subgraph differs from the original graph only near the uncomputable endpoints of the graph, where computable ones are found arbitrarily close and the corresponding edges made that much smaller.

The main idea is the following: suppose we are given a semicomputable graph with possibly uncomputable endpoints. Lemma~\ref{lm:izrezivanje} allows us to shorten an edge of the graph ending in an uncomputable point so that it ends in a computable point. We then repeat this procedure to get a semicomputable subgraph with computable endpoints. To complete the proof that the resulting graph is indeed computable, we leverage a result from the literature~\cite[Theorem 5.2]{mlq-graphs} which states that a topological pair of a graph and the set of all its endpoints has computable type. In particular, a semicomputable graph is computable if the set of all its endpoints is semicomputable; this generalizes the result~\cite[Theorem~7.5]{lmcs:1mnf} that each semicomputable 1-manifold with
semicomputable boundary and finitely many components is computable.

\subsection{Compact graphs}

Let $G$ be a compact graph. Then $G = \bigcup K$, where
$K$ is a nonempty finite family of (non-degenerate) line segments in $\R^n$ such that the following holds:
\begin{equation}\label{eq:simplcompl}
\text{if }I,J\in K\text{ are such that }I\ne J\text{ and }I\cap J\ne\emptyset\text{, then }I\cap  J=\{a\},
\end{equation}
where $a$ is an endpoint of both $I$ and $J$.

Equivalently, a topological space $G$ is a compact graph if and only if there exists a nonempty finite family $\mathcal A$ of subspaces of $G$ such that each $A\in \mathcal A$ is an arc, such that $G=\bigcup_{A\in \mathcal A}A$ and such that for all $A,B\in \mathcal A$ the following holds:
\begin{equation}\label{eq:graph-arcs}
\text{if }A,B\in \mathcal A\text{ are such that }A\ne B\text{ and }A\cap B\ne\emptyset \text{, then }A\cap  B=\{a\},
\end{equation}
where $a$ is an endpoint of both $A$ and $B$~\cite[Remark 4.2]{mlq-graphs}. If $G$ is a compact graph and $\mathcal A$ is a family with the described properties, then we say that $\mathcal A$ \defn{defines} $G$.

If $G$ is a compact graph and if $\mathcal A$ defines $G$, then $x\in G$ is an endpoint of $G$ if and only if there exists a unique arc $A\in\mathcal A$ such that $x$ is an endpoint of $A$~\cite{mlq-graphs}. In particular, the set of all endpoints of a compact graph $G$ is finite.

If $G$ is a compact graph and $E$ the set of all its endpoints, then $(G,E)$ has computable type~\cite{mlq-graphs}. Hence, if \Xdalfa\ is a computable metric space and $S$ a semicomputable set in this space which is, as a subspace of \Xd, a compact graph, then $S$ is computable if each endpoint of $S$ is computable.

Suppose $S$ is a semicomputable compact graph in a computable metric space \Xdalfa. If not all of its endpoints are computable, then $S$ need not be computable (even a semicomputable arc in $\R$ need not be computable~\cite{miller}). We want to show that such an $S$ can be approximated by a computable subgraph $T$. Namely, for each uncomputable endpoint $a$ of $S$, we cut off some small neighbourhood of $a$ in $S$ and get a computable graph $T$ with computable endpoints (see Figure~\ref{fig:graph}). In general, endpoints of a computable graph $T$ need not be computable, even when $T$ is an arc~\cite{miller}.

\begin{figure}[ht]
    \centering
    \begin{tikzpicture}
\draw[line width=1.5mm, line cap=round, orange!50!white] (1.2,6) -- (2,6) -- (3,5)
			-- (3.5, 6.5) -- (2, 6);
\draw[line width=1.5mm, line cap=round, orange!50!white] (2, 8) -- (3.5, 6.5) -- (5,7);
\draw[line width=1.5mm, line cap=round, orange!50!white] (3, 5) -- (4.6, 4.2);
\draw (1,6) -- (2,6) -- (3,5);
\draw (0.94,6) circle (0.06);
\draw (3,5) -- (4.94,4.03);
\draw (2,6) -- (3.5,6.5) -- (3,5);
\draw (2,8) -- (3.5,6.5);
\draw (3.5,6.5) -- (5,7);
\fill (5,7) circle (0.07);
\draw[gray, dashed] (0.94,5.6) arc (-60:60:0.5);
\fill (2, 8) circle (0.07);
\draw (5, 4) circle (0.07);
\draw[gray, dashed] (4.5, 3.8) arc (180:80:0.6);
\end{tikzpicture}
    \caption{A semicomputable compact graph. Empty circles denote uncomputable points and filled circles denote computable points. The highlighted subset is a computable compact graph.}
    \label{fig:graph}
\end{figure}

For simplicity, we will use the following notation: if $A$ is an arc and $x,y\in A$ are such that $x\ne y$, then the unique subspace of $A$ which is an arc whose endpoints are $x$ and $y$ will be denoted by $\luk xyA$.

\begin{thm}\label{tm:zadnjidio}
Let \Xdalfa\ be a computable metric space and let $S$ be a semicomputable compact graph in this space. Let $\{A_0,\dots,A_n\}$ be a family of arcs which defines $S$.

Let $\Gamma $ be the set of all endpoints of $S$ which are uncomputable. Let $\eps >0$.

Then there exist arcs $A_0',\dots,A_n'$ such that for each $i\in \{0,\dots,n\}$ the following hold:
\begin{enumerate}[label=(\roman*)]
    \item if none of the endpoints of the arc $A_i $ belong to $\Gamma $, then $A_i'=A_i $;
    \item if $x$ and $y$ are endpoints of $A_i $ such that $x\in \Gamma $ and $y\notin \Gamma $, then $A_i'=\luk zy{A_i}$, where $z\in A_i $ and $\luk xz{A_i}\subseteq B(x,\eps) $;
    \item if $x$ and $y$ are endpoints of $A_i $ such that $x,y\in \Gamma $,  then $A_i'=\luk{z_1}{z_2}{A_i}$, where $z_1,z_2 \in A_i $ and $\luk x{z_1}{A_i}\subseteq B(x,\eps)$,  $\luk{z_2}y{A_i}\subseteq B(y,\eps)$.
    \item the set $T=A_0'\cup \dots \cup A_n'$ is a computable compact graph with computable endpoints.
\end{enumerate}
\end{thm}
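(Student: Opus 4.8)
The plan is to cut off a small neighbourhood of each uncomputable endpoint by repeated application of Lemma~\ref{lm:izrezivanje}. Since $S$ is a compact graph, its set of endpoints is finite, so $\Gamma=\{x_1,\dots,x_m\}$ is finite; I would enumerate it. Because we only ever cut near degree-one vertices (the graph-endpoints lying in $\Gamma$) and never at the shared vertices where two arcs of the defining family meet, the intersection pattern of the family is preserved and the result is again a compact graph. I would also assume, shrinking $\eps$ if necessary (which only strengthens the conclusion, since $B(x,\eps')\subseteq B(x,\eps)$ for $\eps'\le\eps$), that each ball $B(x_k,\eps)$ meets only the unique arc $A_{i_k}$ incident to $x_k$ and contains neither the opposite endpoint of that arc nor any shared vertex; this guarantees that each cut removes a proper initial segment and leaves a nondegenerate arc.

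First I would set $S_0\defj S$ and process $x_1,\dots,x_m$ in turn, maintaining the invariant that $S_k$ is semicomputable. At step $k$, the point $x_k$ is a degree-one vertex of $S_{k-1}$ lying on a unique arc, so a sufficiently short initial segment of that arc at $x_k$ is open in $S_{k-1}$ and homeomorphic to $[0,1\rangle$ via a homeomorphism $f$ sending $0$ to $x_k$. Applying Lemma~\ref{lm:izrezivanje} to $S_{k-1}$ with a radius $\eps_k\le\eps$ chosen small enough to keep the cut disjoint from the earlier ones, I obtain a computable point $z_k\defj f(a_k)$ with $\luk{x_k}{z_k}{A_{i_k}}\subseteq B(x_k,\eps_k)\subseteq B(x_k,\eps)$ and a semicomputable remainder $S_k\defj S_{k-1}\setminus f([0,a_k\rangle)$. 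Crucially, the lemma guarantees that $S_k$ is again semicomputable, so the invariant is preserved and the hypotheses remain available at the next step. After the last step, $T\defj S_m$ is semicomputable; setting $A_i'\defj A_i$ for arcs untouched by any cut, $A_i'\defj\luk{z}{y}{A_i}$ for an arc cut at one end, and $A_i'\defj\luk{z_1}{z_2}{A_i}$ for an arc cut at both ends yields exactly the forms demanded in (i)--(iii).

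It then remains to verify (iv). The family $\{A_0',\dots,A_n'\}$ consists of nondegenerate arcs, and since we never cut at shared vertices, $A_i'\cap A_j'=A_i\cap A_j$ for $i\ne j$; hence this family satisfies the intersection condition~\eqref{eq:graph-arcs} and defines $T$ as a compact graph. Its graph-endpoints are precisely the computable graph-endpoints of $S$ that were left alone, together with the newly created points $z_k$, all of which are computable; therefore the (finite) set of all endpoints of $T$ is a finite set of computable points and so is semicomputable. Since $T$ is a semicomputable compact graph all of whose endpoints are computable, Theorem~\ref{tm:grafovi} yields that $T$ is computable, completing (iv).

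The main obstacle I anticipate is the bookkeeping of the iteration: one must show that cutting near $x_k$ does not destroy the local $[0,1\rangle$-structure needed at the later points $x_{k+1},\dots,x_m$ (the delicate case being an arc both of whose endpoints lie in $\Gamma$, where two cuts are made on the same arc and must remain disjoint and properly ordered), and one must check that removing an open half-segment at a degree-one vertex alters neither the degrees of the interior vertices nor the intersection pattern of the defining family. Both points are controlled by choosing each cut-radius $\eps_k$ small enough, which is possible because there are only finitely many endpoints and Lemma~\ref{lm:izrezivanje} may be invoked with any radius $\le\eps$.
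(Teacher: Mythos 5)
Your proof is correct and follows essentially the same route as the paper: both cut off each uncomputable endpoint by iterated application of Lemma~\ref{lm:izrezivanje}, maintaining the invariant that the remaining set is a semicomputable compact graph defined by the shortened family, and conclude via Theorem~\ref{tm:grafovi} once every endpoint is computable. The only cosmetic difference is that your preliminary shrinking of $\eps$ and the per-step radii $\eps_k$ are unnecessary: Lemma~\ref{lm:izrezivanje} cuts along the arc's parametrization with $a\in\langle 0,1\rangle$, so the removed piece is automatically a proper initial segment meeting no other arc, and the second cut on a doubly-uncomputable arc is performed in the already-shortened graph and hence cannot collide with the first.
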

\begin{proof}
Let us define $A_0'$ in the following way. If none of the endpoints of $A_0 $ belong to $\Gamma $, let $A_0'=A_0$.

Suppose that $x\in \Gamma $ and $y\notin \Gamma $, where $x$ and $y$ are the endpoints of $A_0 $. There exists a homeomorphism $f:[0,1]\to A_0 $ such that $f(0)=x$ and $f(1)=y$. The set $f([0, 1 \rangle)$ is open in $S$: its complement in $S$ is equal to (note that $A_j \cap f( \langle 0, 1 \rangle) = \emptyset$ for $j \ne 0$ since $A_j$ can intersect $A_0$ only in an endpoint, by the definition of a graph)
\[ \{ f(1)\} \cup \bigcup\nolimits_{i \ne 0} A_i\text, \]
which is clearly a closed set in $S$.

Therefore, $f([0, 1\rangle)$ is an open neighbourhood of $x$ in $S$. By Lemma~\ref{lm:izrezivanje}, there exists $a \in \langle 0, 1\rangle $ such that $f(a)$ is a computable point, $S \setminus f([0, a \rangle)$ is a semicomputable set and $f([0, a]) \subseteq B(x, \eps)$. 

Let $z = f(a) \in A_0$ and $A_0' = \luk z y {A_0}$. It holds that $\luk x z{A_0} \subseteq B(x, \eps)$. Now, the family of arcs $\{A_0',A_1,\dots,A_n\}$ defines a compact graph $T$ such that the set of all uncomputable endpoints of $T$ is equal to $\Gamma \setminus \{x\}$. We also have
\[ T = S \setminus f([0, a \rangle) \text,\]
so $T$ is semicomputable.

Finally, if both endpoints $x$ and $y$ of $A_0 $ belong to $\Gamma $, we proceed as in the previous case to obtain an arc of the form $\luk x{z_2}{A_0} $, where $z_2 $ is a computable point such that $z_2 \in A_0$, $\luk{z_2}{y}{A_0}\subseteq B(y,\eps)$ and such that $\luk{x}{z_2}{A_0}\cup \bigcup_{i\ne 0}A_i$ is a semicomputable set. We then apply a similar procedure to the point $x$ to obtain an arc of the form $\luk{z_1}{z_2}{A_0}$, where $z_1 $ is a computable point such that $z_1 \in \luk{x}{z_2}{A_0} $, $\luk{x}{z_1}{A_0}\subseteq B(x,\eps)$ and such that $\luk{z_1}{z_2}{A_0}\cup \bigcup_{i\ne 0}A_i$ is a semicomputable set. We define $A_0'=\luk{z_1}{z_2}{A_0}$.

As in the previous case, we have that the family $\{A_0',A_1,\dots,A_n\}$ defines a semicomputable compact graph $T$ such that the set of all uncomputable endpoints of $T$ is equal to $\Gamma\setminus\{x,y\}$.

Now we define $A_1'$ in the same way and we get that the family $\{A_0',A_1',A_2,$ $\dots,A_n\}$ defines a semicomputable compact graph $T$ such that the set of all uncomputable endpoints of $T$ is equal to
$\{x\in \Gamma \mid  x$ is an endpoint of $A_i $ for some $i\ge 2\}$.

In finitely many steps we get $A_0',\dots,A_n'$ which satisfy properties (i)--(iii) such that $\{A_0',\dots,A_n'\}$ defines a semicomputable compact graph $T$ which has no uncomputable endpoints. Hence, every endpoint of $T$ is computable, so $T$ is computable by Theorem~\ref{tm:grafovi}.
\end{proof}

An immediate consequence of Theorem~\ref{tm:zadnjidio} is the following fact.
\begin{cor}
Let \Xdalfa\ be a computable metric space and let $S$ be a semicomputable compact graph in this space. Then for each $\eps >0$ there exists a computable compact graph $T$ in \Xdalfa\ such that $T\subseteq S$, all endpoints of $T$ are computable and $d_{H}(S,T)<\eps $.
\end{cor}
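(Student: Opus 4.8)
The plan is to derive this corollary directly from Theorem~\ref{tm:zadnjidio}. First I would fix a defining family $\{A_0,\dots,A_n\}$ of arcs for the semicomputable compact graph $S$ and let $\Gamma$ denote the set of its uncomputable endpoints. I would then apply Theorem~\ref{tm:zadnjidio} not to $\eps$ itself but to $\eps/2$, obtaining arcs $A_0',\dots,A_n'$ and the set $T=A_0'\cup\dots\cup A_n'$. By clause~(iv), $T$ is a computable compact graph all of whose endpoints are computable, which already settles two of the three required properties. Since in each of the cases~(i)--(iii) the arc $A_i'$ is a subarc of $A_i$ (or equal to it), we immediately get $T=\bigcup_i A_i'\subseteq\bigcup_i A_i=S$, so $T\subseteq S$ as well.

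It remains to verify $d_H(S,T)<\eps$, which by the characterization recorded after the definition of Hausdorff distance is equivalent to $S\approx_\eps T$. One direction is trivial: because $T\subseteq S$, every point of $T$ lies in $S$ at distance $0$. For the other direction I would take an arbitrary $s\in S$, so $s\in A_i$ for some $i$, and exhibit a point of $T$ within $\eps$ of it. If $s$ already lies in $A_i'$ there is nothing to do. Otherwise $s$ lies in one of the cut-off end portions, which—by clauses~(ii) and~(iii) applied with $\eps/2$—is contained in a ball $B(x,\eps/2)$ around the corresponding uncomputable endpoint $x\in\Gamma$; moreover the retained computable endpoint (namely $z$, $z_1$ or $z_2$, according to the case) is itself an endpoint of that same cut-off portion and hence also lies in $B(x,\eps/2)$. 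Taking $t$ to be this endpoint, which belongs to $T$, the triangle inequality gives $d(s,t)\le d(s,x)+d(x,t)<\eps/2+\eps/2=\eps$.

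The only point requiring any care is this last distance estimate, and it is precisely why I would invoke Theorem~\ref{tm:zadnjidio} with $\eps/2$ rather than with $\eps$: applying it with $\eps$ would only bound the diameter of each cut-off portion by $2\eps$ and hence only yield $d_H(S,T)<2\eps$. With the halved radius the two relevant points $s$ and $t$ both lie in a ball of radius $\eps/2$, so their distance is strictly below $\eps$. Assembling the two inclusions defining the $\eps$-closeness relation then gives $S\approx_\eps T$, hence $d_H(S,T)<\eps$, completing the argument.
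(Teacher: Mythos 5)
Your proposal is correct and follows exactly the route the paper intends: the paper presents this corollary as an immediate consequence of Theorem~\ref{tm:zadnjidio} with no further argument, and your write-up simply supplies the details (the inclusions $A_i'\subseteq A_i$ giving $T\subseteq S$, clause~(iv) giving computability, and the triangle inequality through the uncomputable endpoint giving the Hausdorff estimate). Your observation that the theorem must be invoked with $\eps/2$ rather than $\eps$ to get $d_H(S,T)<\eps$ instead of $2\eps$ is precisely the small piece of care the word \enquote{immediate} glosses over, and it is handled correctly.
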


\subsection{Non-compact graphs}

Having generalized notions of a semicomputable and computable set to non-compact sets, the question is under what conditions the implication
\begin{equation}\label{eq:semi-comp-noncomp}
S\text{ semicomputable}\Longrightarrow S\text{ computable}
\end{equation}
holds. Beside the known conditions when $S$ is compact, it is known that~\eqref{eq:semi-comp-noncomp} holds when $S$ is a graph such that each endpoint of $S$ is computable~\cite{mlq-graphs}.

Any topological space homeomorphic to a ray is called a \defn{topological ray}. If $R$ is a topological ray and $a\in R$ such that $R\setminus \{a\}$ is connected, then we say that $a$ is an \defn{endpoint} of $R$. Note that each topological ray has a unique endpoint.

If $G$ is a graph, then it follows easily from the definition that there exists a finite nonempty family $\mathcal A$ of closed subspaces of $G$ such that each element of $\mathcal A$ is an arc or a topological ray, $G=\bigcup \mathcal A$ and for all $A,B\in \mathcal A$ the following holds:
\begin{equation}\label{eq:gengraph-arcs}
\text{if }A,B\in \mathcal A\text{ are such that }A\ne B\text{ and }A\cap B\ne\emptyset \text{, then }A\cap  B=\{a\},
\end{equation}
where $a$ is an endpoint of both $A$ and $B$.

Conversely, let $G$ be a topological space such that $G=\bigcup \mathcal A$, where $\mathcal A$ is a finite nonempty family of closed subspaces of $G$ such that each element of $\mathcal A$ is an arc or a topological ray and such that for all $A,B\in \mathcal A$ the implication~\eqref{eq:gengraph-arcs} holds. We want to show that $G$ is a graph.

Suppose $\mathcal A=\{A_0,\dots,A_m\}$. Similarly to~\cite[Remark 4.2]{mlq-graphs}, we get that there exist $n\in \N\setminus \{0\}$ and finitely many subsets $F_0,\dots,F_m $ of $\R^n$ such that $K_i $ is a line segment or a ray for each $i\in \{0,\dots,m\}$ and for each  $i\in \{0,\dots,m\}$ there exists a homeomorphism $h_i:A_i\to K_i $ such that for all $i\ne j$ the following holds:
\begin{equation}\label{eq:zadnjidio-h-ovi}
h_i (A_i \cap A_j) =h_j (A_i \cap A_j)=h_i (A_i)\cap h_j (A_j).    
\end{equation}
The fact that $A_0,\dots,A_m $ are closed in $G$ allows us to glue the maps $h_0,\dots,h_m $ and to get a continuous bijection $f:G\to K_0 \cup \dots \cup K_m$. Each of the sets $K_0,\dots,K_m$ is closed in $\R^n$, thus also in $K_0 \cup \dots \cup K_m$, and so $f^{-1} $ is continuous too. Hence $f$ is a homeomorphism. The equalities~\eqref{eq:zadnjidio-h-ovi} imply that the family $\{K_0,\dots,K_m\}$ satisfies~\eqref{eq:simplcompl-2}, which means that $G$ is a graph.

If $G$ is a  graph and $\mathcal A$ is a finite nonempty family of closed subspaces of $G$ such that each element of $\mathcal A$ is an arc or a topological ray and for all $A,B\in \mathcal A$ the implication~\eqref{eq:gengraph-arcs} holds, then we say that $\mathcal A$ \defn{defines} $G$.

Let $R$ be a topological ray and let $x,y\in R$, $x\ne y$. By $\luk xyR$ we denote a (unique) subspace of $R$ which is an arc with endpoints $x$ and $y$. Note the following: if $a$ is an endpoint of $R$ and $x\in R$, $x\ne a$, then
$(R\setminus\luk axR)\cup \{x\}$ is a topological ray and a closed subspace of $R$.

\begin{thm}\label{tm:zadnjidio-2}
Let \Xdalfa\ be a computable metric space and let $S$ be a semicomputable graph in this space. Let $\{A_0,\dots,A_n\}$ be a family of arcs and topological rays which defines $S$. Let $\Gamma $ be the set of all endpoints of $S$ which are uncomputable. Let $\eps >0$. Then there exist arcs and topological rays $A_0',\dots,A_n'$ such that for each $i\in \{0,\dots,n\}$ the following hold:
\begin{enumerate}[label=(\roman*)]
    \item if $A_i $ is an arc whose endpoints do not belong to $\Gamma $, then $A_i'=A_i $;
    \item if $A_i $ is an arc and $x$ and $y$ are endpoints of $A_i $ such that $x\in \Gamma $ and $y\notin \Gamma $, then $A_i'=\luk zy{A_i}$, where $z\in A_i $ and $\luk xz{A_i}\subseteq B(x,\eps) $;
    \item if $A_i $ is an arc and $x$ and $y$ are endpoints of $A_i $ such that $x,y\in \Gamma $,  then $A_i'=\luk{z_1}{z_2}{A_i}$, where $z_1,z_2 \in A_i$ and $\luk{x}{z_1}{A_i}\subseteq B(x,\eps)$,  $\luk{z_2}{y}{A_i}\subseteq B(y,\eps)$.
    \item if $A_i $ is a topological ray whose endpoint does not belong to $\Gamma $, then $A_i'=A_i $;
    \item if $A_i $ is a topological ray whose endpoint $x$ belongs to $\Gamma $, then  $A_i'=(A_i\setminus  \luk xz{A_i})\cup \{z\}$, where $z\in A_i $ and $\luk xz{A_i}\subseteq B(x,\eps) $;
    \item the set $T=A_0'\cup \dots \cup A_n'$ is a computable graph with computable endpoints.
\end{enumerate}
\end{thm}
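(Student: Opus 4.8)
The plan is to follow the proof of Theorem~\ref{tm:zadnjidio} almost verbatim, processing the edges $A_0, \dots, A_n$ one at a time and handling the two additional cases in which the edge is a topological ray. I would maintain the invariant that after treating $A_0, \dots, A_k$ the family $\{A_0', \dots, A_k', A_{k+1}, \dots, A_n\}$ defines a semicomputable graph whose set of uncomputable endpoints consists exactly of those endpoints in $\Gamma$ belonging to the still-untreated edges $A_{k+1}, \dots, A_n$. Each step replaces one edge and removes from $\Gamma$ the uncomputable endpoints of that edge, while keeping the whole graph semicomputable.

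For the arc cases (i)--(iii) the argument is identical to Theorem~\ref{tm:zadnjidio}. If $A_i$ is an arc with an endpoint $x \in \Gamma$, I fix a homeomorphism $f : [0, 1] \to A_i$ with $f(0) = x$ and $f(1) = y$, note that $f([0, 1\rangle)$ is open in the current graph (its complement being $\{f(1)\} \cup \bigcup_{j \ne i} A_j$, which is closed), and invoke Lemma~\ref{lm:izrezivanje} to produce a computable point $z = f(a)$ with $\luk x z {A_i} \subseteq B(x, \eps)$ such that deleting $f([0, a\rangle)$ leaves a semicomputable graph; then $A_i' \defj \luk z y {A_i}$. If both endpoints lie in $\Gamma$ I apply this twice, exactly as in the compact proof.

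The genuinely new part is the ray cases (iv)--(v). If $A_i$ is a ray whose endpoint is not in $\Gamma$, I set $A_i' \defj A_i$. If its endpoint $x$ lies in $\Gamma$, then $x$ is a free endpoint of the current graph, hence an endpoint of the unique edge $A_i$, so no other edge meets $A_i$ near $x$. Fixing a homeomorphism $f : [0, \infty\rangle \to A_i$ with $f(0) = x$, the set $f([0, 1\rangle)$ is an open neighbourhood of $x$ in the graph (its complement $f([1, \infty\rangle) \cup \bigcup_{j \ne i} A_j$ is closed) and is homeomorphic to $[0, 1\rangle$. This is precisely the hypothesis of Lemma~\ref{lm:izrezivanje}, which again yields a computable $z = f(a)$ with $\luk x z {A_i} \subseteq B(x, \eps)$ and with $S \setminus f([0, a\rangle)$ semicomputable. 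I then put $A_i' \defj (A_i \setminus \luk x z {A_i}) \cup \{z\}$, which by the remark preceding the theorem is again a topological ray and a closed subspace of $A_i$, so the modified family still defines a graph with one fewer uncomputable endpoint.

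After finitely many steps the family $\{A_0', \dots, A_n'\}$ defines a semicomputable graph $T$ with no uncomputable endpoints. Being defined by finitely many arcs and rays, $T$ has only finitely many endpoints, and as a finite set of computable points this endpoint set is computable, hence semicomputable; Theorem~\ref{tm:grafovi} then gives that $T$ is computable, which is conclusion (vi). I expect the only real obstacle to be the bookkeeping that each excision simultaneously preserves the graph structure and the semicomputability of the entire graph. The graph structure is preserved by the standard subdivision of an arc in (i)--(iii) and by the cited remark on rays in (v); and semicomputability transfers from the single-edge statement of Lemma~\ref{lm:izrezivanje} to the whole graph exactly as in Theorem~\ref{tm:zadnjidio}, since cutting $f([0, a\rangle)$ off one edge removes precisely that set from the ambient graph.
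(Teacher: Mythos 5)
Your proposal is correct and takes essentially the same approach as the paper, which proves this theorem precisely by repeating the proof of Theorem~\ref{tm:zadnjidio}; your treatment of the ray case --- applying Lemma~\ref{lm:izrezivanje} to the open neighbourhood $f([0,1\rangle)$ of the uncomputable endpoint and invoking the remark that $(A_i\setminus\luk{x}{z}{A_i})\cup\{z\}$ is again a topological ray and a closed subspace --- is exactly the step the paper leaves implicit. Your endgame also matches: after finitely many excisions the defining family yields a semicomputable graph whose finitely many endpoints are all computable, hence form a semicomputable set, so Theorem~\ref{tm:grafovi} gives computability of $T$.
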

\begin{proof}
We proceed as in the proof of Theorem~\ref{tm:zadnjidio}.

Let $A_{0}' $ be defined in the same way as in the proof of Theorem~\ref{tm:zadnjidio} in the case when $A_{0} $ is an arc. 

If $A_{0} $ is a ray whose endpoint does not belong to $\Gamma $, let $A_{0}'=A_{0} $. 

If $A_{0} $ is a ray whose endpoint $x$ belongs to $\Gamma $, then $A_{0} '$ will be defined as $(A_i\setminus  \luk xz{A_i})\cup \{z\}$, where $z\in A_i $, $\luk xz{A_i}\subseteq B(x,\eps) $ and $A_{0} '\cup A_{1} \cup \dots \cup A_{n}$ is a semicomputable set. We obtain such a ray $A_{0} '$ in the same way as we obtained $A_{0} '$ in the proof of Theorem~\ref{tm:zadnjidio} in the case when $A_{0} $ is an arc whose one endpoint belongs to $\Gamma $.

In finitely many steps we get $A_0',\dots,A_n'$ which satisfy properties (i)--(v) such that $\{A_0',\dots,A_n'\}$ defines a semicomputable graph $T$ which has no uncomputable endpoints. Hence, every endpoint of $T$ is computable, so $T$ is computable by Theorem~\ref{tm:grafovi}.
\end{proof}

Although graphs in metric spaces are unbounded in general and therefore the Hausdorff distance does not make sense in this context, we do have the following immediate consequence of Theorem~\ref{tm:zadnjidio-2}.
\begin{cor}\label{cor:zadnjidio}
Let \Xdalfa\ be a computable metric space and let $S$ be a semicomputable graph in this space. Then for each $\eps >0$ there exists a computable graph $T$ in \Xdalfa\ such that $T\subseteq S$, all endpoints of $T$ are computable and $S\approx_\eps\! T$.
\end{cor}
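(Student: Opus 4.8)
The plan is to invoke Theorem~\ref{tm:zadnjidio-2}, which does essentially all the work, and then merely check that the graph $T$ it produces is $\eps$-close to $S$. The only point requiring attention is a factor of two: a discarded point of $\luk xz{A_i}$ is guaranteed only to lie within the prescribed radius of the cut-off endpoint $x$, not of the retained endpoint $z$, so a direct application with radius $\eps$ would yield just $S\approx_{2\eps} T$ via the triangle inequality. To obtain $S\approx_\eps T$ I would therefore apply the theorem with radius $\eps/2$.

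Concretely, since $S$ is a semicomputable graph, it is defined by some finite family $\{A_0,\dots,A_n\}$ of arcs and topological rays; let $\Gamma$ be the set of uncomputable endpoints of $S$. First I would apply Theorem~\ref{tm:zadnjidio-2} to this family with the value $\eps/2$ in place of $\eps$, obtaining arcs and rays $A_0',\dots,A_n'$ satisfying (i)--(vi). By (vi) the set $T=A_0'\cup\dots\cup A_n'$ is a computable graph with computable endpoints, and because each $A_i'$ is a subspace of $A_i\subseteq S$ in every case (i)--(v), we have $T\subseteq S$. This settles at once the direction ``for every $t\in T$ there is $s\in S$ with $d(s,t)<\eps$'' of $S\approx_\eps T$, taking $s=t$.

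It then remains to verify the reverse direction. Fixing $s\in S$, I would choose $i$ with $s\in A_i$. If $s\in A_i'$ (in particular in cases (i) and (iv), where $A_i'=A_i$), then $s\in T$ and $d(s,s)=0<\eps$. Otherwise $s$ lies in the discarded part $A_i\setminus A_i'$, which by cases (ii), (iii), (v) is contained in $B(x,\eps/2)$ for the relevant uncomputable endpoint $x$; the corresponding new endpoint $z$ belongs to $A_i'\subseteq T$ and also satisfies $z\in\luk xz{A_i}\subseteq B(x,\eps/2)$. Hence $d(s,z)\le d(s,x)+d(x,z)<\tfrac\eps2+\tfrac\eps2=\eps$, which exhibits a point $z\in T$ within $\eps$ of $s$ and establishes $S\approx_\eps T$.

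The step that needs the most care --- indeed the only genuine subtlety, the corollary being ``immediate'' --- is this bookkeeping of the factor of two: one must recognize that the nearest retained point to a discarded $s$ is the new computable endpoint $z$, not the removed endpoint $x$, so that the guarantee $\luk xz{A_i}\subseteq B(x,\eps)$ has to be sharpened to $B(x,\eps/2)$ before the triangle inequality delivers $d(s,z)<\eps$. Everything else --- that $T\subseteq S$, that $T$ is a computable graph with computable endpoints, and that both quantifier directions of $\approx_\eps$ hold --- is routine once Theorem~\ref{tm:zadnjidio-2} is in hand.
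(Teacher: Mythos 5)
Your proposal is correct and takes essentially the same route as the paper, which states the corollary without proof as an immediate consequence of Theorem~\ref{tm:zadnjidio-2}. Your one substantive addition --- invoking the theorem with radius $\eps/2$ because the discarded piece $\luk xz{A_i}$ is controlled around the removed endpoint $x$ rather than the retained computable endpoint $z$, so a direct application would only yield $S\approx_{2\eps}\! T$ --- is exactly the bookkeeping hidden behind the word ``immediate'', and you handle it correctly.
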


\subsection{1-manifolds}

A \defn{$1$-manifold with boundary} is a second-countable Hausdorff space $M$ such that each point of $M$ has a neighbourhood homeomorphic to $[0,\infty\rangle $. The \defn{boundary} $\partial M$ of a such an $M$ is defined as the set of all $x\in M$ such that $x$ has a neighbourhood in $M$ homeomorphic to $[0,\infty\rangle $ by a homeomorphism which maps $x$ to $0$.

If $M$ is a $1$-manifold with boundary, then each component of $M$ is homeomorphic to one of these spaces~\cite{shastri}:
\[\R,\quad [0,\infty\rangle,\quad \Sjedan\text{,\quad or\quad}[0,1].\]
We have the following: $\R$ is the union of two rays in $\R$ which intersect in a common endpoint, $[0,\infty\rangle $ is a ray in $\R$,  $\Sjedan$ is homeomorphic to the union of any three line segments in $\R^2$ each two of which intersect in a common endpoint, and $[0,1]$ is a line segment in $\R$. In general, each component is a closed set. So if we additionally assume that $M$ has finitely many components, it follows that $M$ is a graph. Hence, each $1$-manifold with boundary $M$ with finitely many components is a graph and  $\partial M$ is the set of all endpoints of the graph $M$. We have the following consequence of Corollary~\ref{cor:zadnjidio}.
\begin{cor} 
Let \Xdalfa\ be a computable metric space and let $M$ be a semicomputable 1-manifold in this space such that $M$ has finitely many components. Then for each $\eps >0$ there exists a computable 1-manifold $N$ in \Xdalfa\ such that $N\subseteq M$, each point of $\partial N$ is computable and $M\approx_\eps\! N$.
\end{cor}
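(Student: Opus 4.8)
The plan is to recognize the statement as an essentially immediate consequence of Corollary~\ref{cor:zadnjidio}, once we account for the extra structure demanded of $N$. By the discussion preceding the corollary, a semicomputable $1$-manifold $M$ with finitely many components is a graph in \Xdalfa, and its set of endpoints (as a graph) coincides with $\partial M$. First I would fix a defining family $\{A_0,\dots,A_n\}$ of arcs and topological rays for $M$ and apply Theorem~\ref{tm:zadnjidio-2} with the given $\eps$, letting $\Gamma$ be the set of uncomputable points of $\partial M$. This produces arcs and topological rays $A_0',\dots,A_n'$ and the graph $N\defj A_0'\cup\dots\cup A_n'$; part (vi) guarantees that $N$ is computable with computable endpoints, while $N\subseteq M$ and $M\approx_\eps\! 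N$ are exactly the content of Corollary~\ref{cor:zadnjidio}. What remains is purely topological: to verify that $N$ is a $1$-manifold and that $\partial N$ is precisely its set of endpoints, so that computability of the endpoints of $N$ yields computability of every point of $\partial N$.

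The main obstacle --- and the only point needing argument beyond citation --- is confirming that the trimming procedure preserves the $1$-manifold structure rather than merely the graph structure. I would argue this locally, using the explicit description of the $A_i'$ in cases (i)--(v). Each $A_i'$ is either $A_i$ itself, a subarc $\luk z y{A_i}$ or $\luk{z_1}{z_2}{A_i}$, or a subray $(A_i\setminus\luk x z{A_i})\cup\{z\}$, where the new endpoints $z,z_1,z_2$ are \emph{interior} points of the original $A_i$. Hence each $A_i'$ is again an arc or a topological ray, and because an interior point of $A_i$ lies on no other member $A_j$ (those meet $A_i$ only at its original endpoints by~\eqref{eq:gengraph-arcs}), trimming creates no new intersections and alters no intersection among the retained edges. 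Thus $\{A_0',\dots,A_n'\}$ still satisfies~\eqref{eq:gengraph-arcs}, and each component of $N$ is obtained from a component of $M$ by cutting off some free ends, so it is again homeomorphic to one of $\R$, $[0,\infty\rangle$, $\Sjedan$, or $[0,1]$.

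Finally I would check the local model at each point of $N$. At a point in the interior of an edge, or where exactly two edges meet at a common endpoint, $N$ coincides locally with $M$, which is a $1$-manifold there, so the point has a neighbourhood homeomorphic to $\R$; at a newly introduced endpoint $z$, which is interior to $A_i$ and lies on no other edge, a small neighbourhood in $N$ is a half-open subarc of $A_i'$, homeomorphic to $[0,\infty\rangle$ carrying $z$ to $0$. Hence every point of $N$ has the required neighbourhood, $N$ is a $1$-manifold with finitely many components, and $\partial N$ equals the set of endpoints of the graph $N$. Since all those endpoints are computable, every point of $\partial N$ is computable; together with $N\subseteq M$ and $M\approx_\eps\! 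N$ this completes the proof.
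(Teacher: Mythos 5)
Your proof is correct and follows the paper's route exactly: the paper likewise derives this corollary from Corollary~\ref{cor:zadnjidio} (via Theorem~\ref{tm:zadnjidio-2}), using the preceding observation that a $1$-manifold with finitely many components is a graph whose endpoint set is $\partial M$. Your local verification that trimming at free ends preserves the $1$-manifold structure and the boundary-equals-endpoints identification is precisely the routine detail the paper leaves implicit in calling the result an immediate consequence.
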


\paragraph*{Acknowledgments}
This paper was supported by the European Union---NextGenerationEU through the National Recovery and Resilience Plan 2021--2026 Institutional grant of University of Zagreb, Faculty of Science (IK IA 1.1.3. Impact4Math).

\bibliographystyle{alphaurl} 
\bibliography{main.bib,References.bib}
\end{document}